\documentclass{article} 

\usepackage{graphicx}
\usepackage{mathtools}
\mathtoolsset{showonlyrefs}
\usepackage{amsmath}
\usepackage{amsthm}
\usepackage{algorithm}
\usepackage{a4wide}
\usepackage{commath}
\usepackage{booktabs}

\newcommand{\E}{\mathbf{E}}
\newcommand{\cov}{\mathbf{cov}}
\newcommand{\N}{\mathcal{N}}
\newcommand{\Q}{\mathcal{Q}}
\newcommand{\T}{\mathrm{T}}
\newcommand{\p}{\mathrm{p}}
\newcommand{\q}{\mathrm{q}}
\DeclareMathOperator{\trace}{\mathrm{Tr}\!}
\newcommand{\mc}{\text{\textsc{mc}}}
\newcommand{\vb}{\text{\textsc{vb}}}

\theoremstyle{plain}
\newtheorem{thm}{Theorem}
\newtheorem{cor}{Corollary}
\newtheorem{lem}{Lemma}

\newtheorem{prob}{Problem}
\theoremstyle{remark}
\newtheorem{rem}{Remark}

\allowdisplaybreaks%

\begin{document}

\title{Modeling and identification of uncertain-input systems}

\author{Riccardo Sven Risuleo\\\small{ACCESS Linnaeus Center, KTH Royal Institute of Technology,
    Sweden}\\\texttt{risuleo@kth.se}\and Giulio Bottegal\\\small{Department of Electrical Engineering, Eindhoven University of
  Technology, Eindhoven, The Netherlands}\\\texttt{g.bottegal@tue.nl}\and H\r{a}kan
  Hjalmarsson\\\small{ACCESS Linnaeus Center, KTH Royal Institute of Technology,
  Sweden}\\\texttt{hjalmars@kth.se}}             

\date{\today}

\maketitle{}

\begin{abstract} 
  In this work, we present a new class of models, called uncertain-input
  models, that allows us to treat system-identification problems in which a
  linear system is subject to a partially unknown input signal. To encode prior information about the input or the linear system, we
  use Gaussian-process models.  We estimate the model from data using the
  empirical Bayes approach: the input and the impulse responses of the
  linear system are estimated using the posterior means of the Gaussian-process
  models given the data, and the hyperparameters that characterize the
  Gaussian-process models are estimated from the
  marginal likelihood of the data. We propose an iterative algorithm to find
  the hyperparameters that relies on the EM method and results in simple update
  steps. In the most general formulation, neither the marginal likelihood nor
  the posterior distribution of the unknowns is tractable. Therefore, we
  propose two approximation approaches, one based on Markov-chain Monte Carlo
  techniques and one based on variational Bayes approximation. We also show special model
  structures for which the distributions are treatable exactly. Through
  numerical simulations, we study the application of the uncertain-input model
  to the identification of Hammerstein systems and cascaded linear systems.
  As part of the contribution of the paper, we show that this model structure
  encompasses many classical problems in system identification such as
  classical PEM, Hammerstein models, errors-in-variables problems, blind system
  identification, and cascaded linear systems. This allows us to build
  a systematic procedure to apply the algorithms proposed in this work to a
  wide class of classical problems.
\end{abstract}

\section{Introduction}
In most system identification problems, the input signal---that is, the
independent variable---is perfectly known~\cite{ljung1999system}. Often, the
input signal is the result of an identification experiment, where a signal with
certain characteristics is designed and applied to the system to measure its
response. However, in some applications, the hypothesis that the input signal is
known may be too restrictive. In this work, we propose a new model structure
that accounts for partial knowledge about the input signal and we show how many
classical system identification problems can be seen as problems of identifying
instances of this model structure.

The proposed model structure, which we call \emph{uncertain-input model}, is composed of a linear
time-invariant dynamical system (the \emph{linear system}) and of a signal
 of which partial information is available (the \emph{unknown input}). In the
next section, we characterize formally the unknown input; before that, we give
some examples of classical models that can be seen as uncertain-input models.
The \emph{Hammerstein model} is a cascade composition of a static nonlinear
function followed by a linear time-invariant dynamical
system~\cite{bai1998optimal,giri2010block,risuleo2015kernel}. In the Hammerstein
model, the (perfectly known) input signal passes through the unknown static
nonlinear function. After the nonlinear transformation, the signal which
is fed to the linear system, is completely unknown.
However, some characteristics of the signal may be known; for instance, we may
known that the nonlinear function is smooth or we may have a set of candidate
basis functions to choose among. Another instance of a
model where the input is not perfectly known is the \emph{errors-in-variables}
model~\cite{soederstroem2010system}. In the errors-in-variables formulation, the
input in known up to noisy measurements. The noise in the input introduces many difficulties and
special techniques have been developed to deal with
it~\cite{soederstroem2003why,soederstroem2007errors}. Closely related to
errors-in-variables models, \emph{blind system identification} methods are
used when the input signal is completely unknown~\cite{abedmeraim1997blind}.
These are particularly useful in telecommunications, image reconstruction, and
biomedical
applications~\cite{nakajima1993blind,moulines1995subspace,mccombie2005laguerre}.
Blind system identification problems are generally ill posed, and certain
assumptions on the input signal are needed to recover a
solution~\cite{ahmed2014blind}. Similar to blind problems are the problems of
\emph{system identification with missing data}. In these cases, the missing
data are estimated, together with a description of the system, by making
hypotheses on the mechanism that generated the missing
data~\cite{wallin2014maximum,markovsky2013structured,pillonetto2009bayesian,risuleo2016kernel,linder2017identification}.

In all the applications we have outlined, we can identify the common thread of
a linear system fed by a signal about which we have limited \emph{prior
information}. This leads us naturally to consider a Bayesian
framework where we can use prior distributions to encode beliefs about the
unknown quantities~\cite[Section~2.4]{bernardo2000bayesian}. Within the vast
framework of Bayesian methods, we concentrate on Gaussian
processes~\cite{rasmussen2006gaussian}. These enable us to
compute many quantities in closed form and to reason about identification in
terms of a limited number of sufficient statistics. For these reasons,
Gaussian-process modeling has become a popular approach in system
identification~\cite{pillonetto2011new,frigola2014identification,svensson2017flexible}. Although Gaussian processes
are typically analytically convenient, the structure of the uncertain-input problem leads
to an intractable inference problem: even though we model the system and the
input as Gaussian processes, the output of the system depends on their
convolution and therefore does not admit a Gaussian description. To perform the
inference---that is find the posterior distribution of the unknowns given the
observations---we need approximation methods. We propose two different
approximation methods for the posterior distribution of the unknowns: one Markov
Chain Monte Carlo~(MCMC, see~\cite{gilks1996markov}) method and one variational
 approximation~\cite{beal2003variational} method. In the MCMC
 method, we use the Gibbs sampler~\cite{geman1984stochastic} to draw
 particles from the posterior distribution and we approximate expectations as
 averages computed with the particles. In the variational method, we find the
 factorized distribution that best approximates the posterior distribution in
 Kullback-Leibler distance.

To give flexibility to the model, we allow the Gaussian priors to depend on
certain parameters (called \emph{hyperparameters}) that need to be estimated
from data together with the measurement noise variances. To estimate these
parameters, we use the \emph{empirical Bayes method} which requires maximizing
the marginal distribution of the data (sometimes called \emph{evidence},
see~\cite{maritz1989empirical}). To
this end, we propose an iterative algorithm based on the
Expectation-Maximization
(EM) method~\cite{dempster1977maximum}. The EM method alternates between the
computation of the expected value of the joint likelihood of the data, of the
unknown system, and of the input (E-step), and the maximization of this expected
value with respect to the unknown parameters (M-step). We show that the E-step
can be computed using the same approximations of the posterior distributions
that are used in the inference and that the M-step consists in a series of
simple and independent optimization problems that can be solved easily.

As mentioned above, the uncertain-input model encompasses several classical
model structures that have been object of research in the system-identification
community for decades. Two important contributions of this work are as follows.
\begin{enumerate}
  \item We unify the problems of identifying systems that are usually regarded
    as belonging to different model classes into a single identification
    framework.
  \item We formalize a method to apply the new tools of Gaussian processes and
    Bayesian inference to classical system identification problems.
\end{enumerate}

To support the validity of the proposed methods, we present identification
experiments on synthetic datasets of cascaded linear systems and of
Hammerstein systems.

\subsection{Notation}
The notation $[A]{}_{i,j}$ indicates the element
of matrix $A$ in position $i,j$ (single subscripts are used for vectors).
``$\T_{N\times n}(v)$'' denotes the $N$ by
$n$ lower-triangular Toeplitz matrix of the $m$ dimensional vector $v$:
\begin{equation}
\sbr{\T_{N\times n}(v)}_{i,j} = \begin{cases} v_{i-j+1} & 0\leq i-j+1\leq m\\0
                                                        & \text{otherwise}\end{cases}
\end{equation}
If $v$ is a vector, then $V$ is the $N$ by $N$ Toeplitz matrix whose
elements are given by $v$. The notation ``$\|a\|^2_{M}$'' is shorthand for $a^T Ma$. The
notation ``$\N(\alpha,\Sigma)$'' indicates the Gaussian distribution with mean
vector $\alpha$ and covariance matrix $\Sigma$. The notation
``$\mathcal{GP}(\mu,\Sigma)$'' indicate a Gaussian process with mean function
$\mu$ and covariance function $\Sigma$. Random variables and their realizations
have the same symbol. The notation ``$x;\theta$'' indicates that the random variable
$x$ depends on the parameter $\theta$. If $x$ is a random variable, $\p(x)$
denotes its density. The symbol ``$\cong$'' indicates equality up to an
additive constant and ``$\delta$'' is the Dirac density.

\section{Uncertain-input systems}\label{sec:ui_systems}
In this work, we propose a new model structure called the \emph{uncertain-input
model}. Consider the block scheme in Figure~\ref{fig:ui_block_scheme}. Many system
identification tasks can be formulated as the identification of a linear system $S$,
subject to an input sequence $\{w_t\}$. In this work, we consider problems in
which we have partial information about the input sequence, and this partial
information depends on the specific problem at hand.

\begin{figure}[htb]
  \centering
  \includegraphics{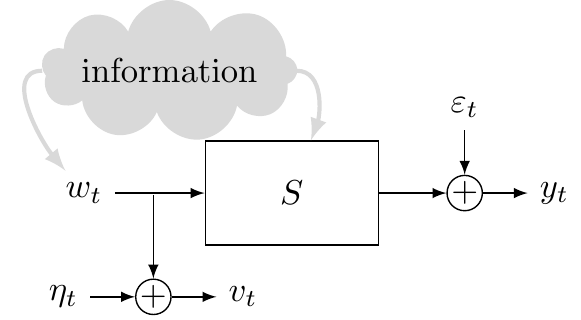}
  \caption{A block scheme of the general uncertain input system.}\label{fig:ui_block_scheme}
\end{figure}

We assume that the linear system $S$ is time invariant, stable, and causal. Therefore,
it is uniquely described by the sequence $\{g_t\}$ of its impulse response
samples, and the output of the system generated by an input $\{w_t\}$ can be
represented as the discrete convolution of the system impulse response with the
input signal---that is, at time $t$, the measurements of the output can be
written as the noise-corrupted discrete convolution
\begin{equation}\label{eq:model_y}
  y_t = \del{w\ast g}_t + \varepsilon_t,
\end{equation}
where $\{\varepsilon_t\}$ is a stochastic process that describes additive
measurement noise, and where ``$\ast$'' denotes the discrete time convolution
\begin{equation}\label{eq:convolution}
  \del{w\ast g}_t = \sum_{k=1}^\infty g_{k}w_{t-k}.
\end{equation}

In the uncertain-input model, we consider that the input signal is measured with
additive white noise described by a stochastic processes $\{\eta_t\}$. This
assumption allows us to write, for the input measurements, the model
\begin{equation}\label{eq:model_v}
  v_t = w_t + \eta_t.
\end{equation}

We assume that the noise processes $\{\eta_t\}$ and $\{\varepsilon_t\}$ are independent
Gaussian white-noise processes. This means that every noise sample has a
Gaussian distribution,
\begin{equation}\label{eq:model_noise}
    \eta_t \sim \mathcal{N}(0,\sigma_v^2),\quad
    \varepsilon_t \sim \mathcal{N}(0,\sigma_y^2),
\end{equation}
and that $\varepsilon_t$ is independent of $\varepsilon_s$, for
$s\neq t$, and of $\eta_s$ for any $s$. To allow for models where some
observations are missing, we assume infinite variance for those noise
components that correspond to the missing samples.

To encode the prior information we have about the input signal and about the linear
system, we use Gaussian process models. We model the unknown input signal
and the impulse response of the linear system as a realization of a joint Gaussian
processes with suitable mean and covariance functions,
\begin{equation}\label{eq:gp_joint}
\sbr[4]{\begin{matrix}w\\g\end{matrix}}\!\!\sim \!\mathcal{GP} \!\del{\!\!
    \sbr[4]{\begin{matrix}\mu_w(\cdot;\theta)\\
    \mu_g(\cdot;\rho)\end{matrix}}\!\!,\!
    \sbr[4]{\begin{matrix}K_w(\cdot;\theta) &
        {K_{gw}(\cdot,\cdot;\rho,\theta)}^T\\
      K_{gw}(\cdot,\cdot;\rho,\theta) &
      K_{g}(\cdot,\cdot;\rho)
  \end{matrix}}
  }\!.
\end{equation}
The mean functions of the Gaussian processes, $\mu_g(\cdot\,;\theta)$ and
$\mu_w(\cdot\,;\rho)$, may depend on the parameter vectors $\theta$ and
$\rho$, called \emph{hyperparameter vectors}, which can be used to shape the prior
information to the specific application. The same goes for the covariance
functions $K_w(\,\cdot\,,\,\cdot\,;\theta)$,
$K_g(\,\cdot\,,\,\cdot\,;\rho)$, and $K_{gw}(\,\cdot\,,\,\cdot\,;\rho,\theta)$
which may depend on (possibly different) hyperparameters.

For notational convenience, we present the explicit computations in the case of
independent Gaussian process models for $g$ and $w$---that is, we consider the
case where
\begin{equation}\label{eq:gp_models}
  \begin{split}
    w &\sim
    \mathcal{GP}\big(\mu_w(\,\cdot\,\,;\theta),K_w(\,\cdot\,,\,\cdot\,;\theta)\big),\\
    g &\sim \mathcal{GP}\big(\mu_g(\,\cdot\,;\rho),K_g(\,\cdot\,,\,\cdot\,;\rho)\big),
  \end{split}
\end{equation}
and the cross-covariance of processes
is zero. However, all results we show hold also in the more general case.

We assume that we have collected $N$ measurements of the processes $\{v_t\}$
and $\{y_t\}$ and, for sake of simplicity, we also assume that $w_t=0$ for
$t<0$ (see~\cite{risuleo2015estimation} for a way to extend the proposed
framework to unknown initial conditions).
From~\eqref{eq:model_y}, we see that the output measurements only depend on the
values of the impulse response at the discrete time instants $t=1,2,\ldots,N$; therefore,
we can consider the joint distribution of the samples $g_t$ for $t=1,2,\ldots,N$.
From the Gaussian process model~\eqref{eq:gp_models}, we have that, if we
collect the samples of $\{g_t\}$ into an $N$-dimensional column vector $g$,
this vector has a joint Gaussian distribution given by
\begin{equation}\label{eq:model_g}
    g \sim \N\big(\mu_g(\rho),K_g(\rho)\big),
\end{equation}
where we have defined the mean vector and the
covariance matrix induced by~\eqref{eq:gp_models} as
\begin{equation}
  {\big[\mu_g(\rho)\big]}_j := \mu_g(j\,;\rho), \quad {\big[K_g(\rho)\big]}_{i,j} :=
  K_g(i,j\,;\rho).
\end{equation}

From~\eqref{eq:model_v} and~\eqref{eq:model_y}, we have that the $N$
measurements of the input and output only depend on the samples $w_t$ for
$t=1,\ldots,N$; therefore, we can consider the joint distribution of these
samples, collected in an $N$-dimensional vector $w$.
This distribution is Gaussian, and it is given by
\begin{equation}\label{eq:model_w}
    w \sim \N\big(\mu_w(\theta),K_w(\theta)\big),
\end{equation}
where we have defined the mean vector and the
covariance matrix induced by~\eqref{eq:gp_models} as
\begin{equation}
  {\big[\mu_w(\theta)\big]}_j := \mu_w(j\,;\theta), \quad {\big[K_w(\theta)\big]}_{i,j} :=
  K_w(i,j\,;\theta).
\end{equation}

Assembling the models for the different components, given
by~\eqref{eq:model_y},~\eqref{eq:model_v},~\eqref{eq:model_noise},~\eqref{eq:model_g},
and~\eqref{eq:model_w}, we arrive at the following definition
of the uncertain-input model:
\begin{equation}\label{eq:ui_model}
  \left\lbrace
  \begin{aligned}
    y &= Wg + \varepsilon\\
    v &= w + \eta\\
    g & \sim \N\big(\mu_g(\rho),K_g(\rho)\big)\\
    w & \sim \N\big(\mu_w(\theta),K_w(\theta)\big)\\
    \varepsilon & \sim \N\big(0,\sigma_y^2 I_N\big)\\
    \eta & \sim \N\big(0,\sigma_v^2 I_N\big)\\
    g,\,&w,\,\varepsilon,\,\eta\text{ mutually independent}
  \end{aligned}
  \right.
\end{equation}
where we have collected the output measurements $\{y_t\}$ in a vector $y$ and
where $\varepsilon$ and $\eta$ are the vectors of the first $N$ input and output
noise samples.  The matrix
$W$ is the $N\times N$ Toeplitz matrix of the input,
$W := \T_{N\times N}(w)$,
 which represents the discrete-time convolution~\eqref{eq:convolution} as the
product $Wg$. If we define the $N\times N$ Toeplitz matrix of the impulse
response samples, $G := \T_{N\times N}(g)$, then we have the property
\begin{equation}\label{eq:toeplitz_property}
  Wg = Gw.
\end{equation}
In the next section, we give examples of some classical system identification
problems that can be cast as uncertain-input identification problems.

\section{Examples of uncertain-input models}
The uncertain-input framework is a generalization of many classical
system-identification problems. All these classical problems can be analyzed
using the tools of uncertain-input models; furthermore, under the right
conditions, the identification approach that we propose for uncertain-input
models reduces to classical system-identification approaches.
\subsection{Linear predictor model}
Consider the \emph{output-error} transfer-function model~\cite{ljung1999system},
\begin{equation}
  y_t = \frac{B(q;\rho)}{F(q;\rho)} u_t + \varepsilon_t,
\end{equation}
where $B(q;\rho)$ and $F(q;\rho)$ are polynomials in the one-step shift operator
$q$ and $\varepsilon_t$ is Gaussian white noise. If we consider the parametric
predictor of the output-error model, we can write it as
\begin{equation}
  \hat y_{t|t-1} = \frac{B(q;\rho)}{F(q;\rho)} u_t = \del{g(\rho)\ast u}_t,
\end{equation}
where $g_t(\rho)$ is the impulse response of the predictor transfer function.
We can see this model as a degenerate uncertain-input model with
\begin{equation}
  \begin{aligned}
    {[\mu_g(\rho)]}_i &= g_i(\rho), & {[K_g(\rho)]}_{i,j} &= 0,\\
    {[\mu_w(\theta)]}_i &= u_i, & {[K_w(\theta)]}_{i,j} &= 0.
  \end{aligned}
\end{equation}
We can also incorporate the framework of Bayesian identification of finite
impulse-response models with first order stable-spline kernels (for a survey,
see~\cite{pillonetto2014kernel}) with the choice
\begin{equation}\label{eq:stable-spline}
  \begin{aligned}
    {[\mu_g(\rho)]}_i &= 0, & {[K_g(\rho)]}_{i,j} &= \rho_1\,\rho_2^{\max(i,j)},\\
    {[\mu_w(\theta)]}_i &= u_i, & {[K_w(\theta)]}_{i,j} &= 0,
  \end{aligned}
\end{equation}
where $\rho_1\geq 0$ is a scaling parameter and $\rho_2\in \intcc{0,1}$
regulates the decay rate of $g$ (see,~\cite{pillonetto2010new}).
Note that, in this formulation, any kernel can be used to model $g$ (see, for
instance,~\cite{chen2014constructive,dinuzzo2015kernels}).

\subsection{Errors-in-variables system identification}
Errors-in-variables models are often described by the set of
equations~\cite{soederstroem2007errors},
\begin{align}
  y_t &= \del{g \ast w}_t + \eta_t,\\
    v_t &= w_t + \varepsilon_t.
\end{align}
It is clear that this type of models naturally fit into the
uncertain-input framework of~\eqref{eq:ui_model}.
In particular, we can consider the classical errors-in-variables
problem of identifying a parametric model of $S$ when $w_t$ is the realization of
a stationary stochastic signal with a rational
spectrum~\cite{castaldi1996identification}. In this case, we can
write $\{w_t\}$ as the filtered white noise process
\begin{equation}
  w_t = \frac{C(q;\theta)}{D(q;\theta)}e_t, 
\end{equation}
where $e_t$ is unitary variance Gaussian white noise, 
and $C(q;\theta)$ and $D(q;\theta)$ are complex polynomials in the one-step
shift operator $q$.

From this expression, we see that $w$ is a Gaussian process with
zero mean and covariance matrix $\Sigma_w(\rho)$ that depends on the
parameterization of the input filter. Using a parametric model for the
system, we obtain an uncertain-input system with
\begin{equation}
  \begin{aligned}
    {[\mu_g(\rho)]}_i &= g_i(\rho), & {[K_g(\rho)]}_{i,j} &= 0,\\
    {[\mu_w(\rho)]}_i &= 0, & {[K_w(\rho)]}_{i,j} &= \Sigma_w(\rho).
  \end{aligned}
\end{equation}
Alternatively, we could estimate all samples of the input signal
with the choice ${[\mu_w(\rho)]}_i = \theta_i$ and
$[K_w(\rho)]=0$, even though this may lead to nonidentifiability of the
model~\cite{soederstroem2003why,zhang2015errors,risuleo2016kernel}.

\subsection{Blind system identification}
Blind system identification can also be cast as the problem of identifying an
uncertain-input model by setting the input noise variance to $\sigma_v^2 =
\infty$ (this indicates that no input measurements are available). In this
case, different parameterizations of the input
lead to different models for the input process. For instance, we can consider
the parameterization of the input as a switching signal with known switching
instants $T_0<T_1<\cdots<T_p$; in this case we can choose
${[\mu_w(\rho)]}_i = h_i^t\theta$,
where $h_i$ is a selection vector that is nonzero in the $i$th interval:
\begin{equation}
  \begin{cases}{}
    {[h_i]}_j = 1 & \text{if}\;\;T_{i-1}< t \leq T_i,\\
    {[h_i]}_j = 0 & \text{otherwise}.
  \end{cases}
\end{equation}
Models similar to this one were used, for instance, in~\cite{ohlsson2014blind}
and~\cite{bottegal2015blind}.

\subsection{Cascaded system identification}\label{sec:cascaded_systems}
In cascaded linear systems, the output of one linear system is used as the input
to a second linear system (see Figure~\ref{fig:cascade}).
\begin{figure}[htb]
  \centering
  \includegraphics{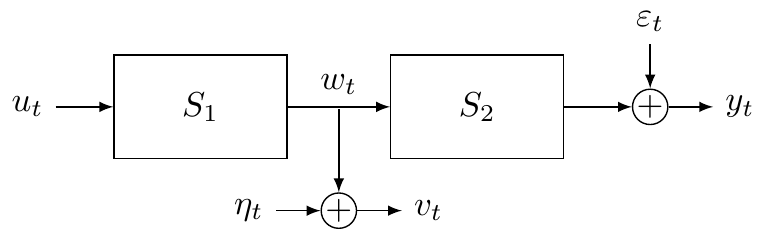}
  \caption{Cascaded linear systems.}\label{fig:cascade}
\end{figure}

For sake of argument, we consider nonparametric models for both linear systems
(the reasoning also holds for parametric models):
\begin{equation}
  g_1 \sim \N\big(0,K_1(\theta)\big),\qquad g_2 \sim \N\big(0,K_2(\rho)\big).
\end{equation}
Because $g_1$ is a Gaussian vector, the intermediate variable $w$ is also a
Gaussian vector, with zero mean and covariance matrix given by
\begin{equation}\label{eq:cascaded_input}
  K_w(\theta) = UK_1(\theta) U^T,
\end{equation}
where $U:=\T_{N\times N}(u)$ is the Toeplitz matrix of the input signal $u_t$. Therefore, we
can model the linear cascade as an uncertain-input model with input modeled as a
zero-mean process with covariance matrix given by~\eqref{eq:cascaded_input}
where, for instance, we use the first-order stable spline kernel introduced
in~\eqref{eq:stable-spline}. The same choice of kernel can be made for $K_2(\rho)$.

\subsection{Hammerstein model identification}\label{sec:hammersteins}
The Hammerstein model is a cascade of a static nonlinear function followed by a
linear dynamical system (see Figure~\ref{fig:hammerstein}).
\begin{figure}[htb]
  \centering
\includegraphics{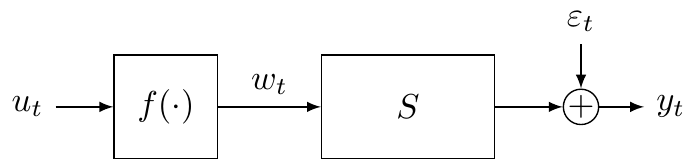}
  \caption{The Hammerstein model.}\label{fig:hammerstein}
\end{figure}

In the Hammerstein model, the intermediate variable $w_t$ is not observed (which,
symbolically, corresponds to an infinite $\sigma_v^2$). If we consider models
for the input block  that are  combinations of known basis
functions~\cite{bai1998optimal}, according to
\begin{equation}
  w_t = \sum_{j=1}^p \theta_j\, \varphi_j(u_t)
\end{equation}
we can collect the values of the unknown input $w$ and the parameters in a vector such that
\begin{equation}
  w = \Phi \theta,\qquad {[\Phi]}_{i,j} = \varphi_j(u_i).
\end{equation}
This can be modeled as an uncertain-input model with
$\mu_w(\theta) = \Phi\theta$ and $K_w(\theta) = 0$.

The uncertain-input framework also encompasses nonparametric models for the
input nonlinearity. For instance, we can model the Hammerstein cascade as the
uncertain-input model with the Gaussian radial-basis-function kernel as input model:
\begin{equation}\label{eq:rbf}
  {[K_w(\theta)]}_{i,j} = \theta_1
  \exp\left\{-\frac{1}{\theta_2}{(u_i-u_j)}^2\right\}.
\end{equation}
As for the linear system, we can use either parametric or nonparametirc
modeling approaches (see~\cite{bai1998optimal,risuleo2015new}).

\section{Estimation of uncertain-input models}\label{sec:estimation}
As discussed in Section~\ref{sec:ui_systems}, we suppose that we have collected
$N$ samples of the output $y_t$ and, possibly, $N$ samples of the noisy input
signal $v_t$ (in some applications, such as Hammerstein models and blind system
identification, these samples are not available). Whenever present, we assume
that the external input $u_t$ is completely known. We consider the following
identification problem.

\begin{prob}\label{prob:sysid}
  Given the $N$-dimensional vectors of measurements $y$ and $v$,  generated
  according to~\eqref{eq:ui_model}, estimate the impulse response $g$, the
  unknown input $w$, and the hyperparameters
  $\tau = \{\rho,\theta,\sigma_y^2,\sigma_v^2\}$.
\end{prob}
Because we are using the Gaussian process model~\eqref{eq:gp_models}, we have
natural candidates for the estimates of $g$ and $w$. Interpreting~\eqref{eq:model_g} and~\eqref{eq:model_w} as prior distributions of the
unknowns, we know that the best estimates given the data (in the minimum
mean-square error sense) are the conditional expectations
\begin{equation}
  g^\star = \E\big[ g\big| y,v\big],\quad w^\star = \E\big[w\big|y,v\big].
\end{equation}
However, these conditional expectations depend on the value of the
hyperparameter vector $\tau$. Because this value is not available, we follow an
empirical Bayes approach~\cite{maritz1989empirical} and we approximate
the true conditional expectations---that correspond to the true values of the
hyperparameters $\tau$---with the conditional expectations
\begin{equation}\label{eq:solution_posteriormean}
    \hat g := \int g\, \p(g|y,v\,;\hat\tau) \dif g,\qquad
    \hat w := \int w\, \p(w|y,v\,;\hat\tau) \dif w,
\end{equation}
where we are using estimated values $\hat \tau$ of the hyperparameters. In the
empirical Bayes approach, the estimates of the hyperparameters are chosen
by maximizing the marginal likelihood of the data,
\begin{equation}\label{eq:solution_marginal}
  \hat \tau := \arg \max_{\tau} \log \p(y,v\,;\tau),
\end{equation}
where $\p(y,v\,;\tau)$ is the marginal distribution of the measurements according
to the model in~\eqref{eq:ui_model}.

Solving~\eqref{eq:solution_marginal} yields the marginal likelihood estimate of
the hyperparameters that can be used to find the empirical Bayesian estimates of
$g$ and $w$ in~\eqref{eq:solution_posteriormean}. However, this approach
requires distributions that, in general, are not available in closed form.
Furthermore,~\eqref{eq:solution_marginal} is possibly a
high-dimensional optimization problem that does not admit an analytical
expression. To address this last problem, we use the
EM method to derive an iterative algorithm that
solves~\eqref{eq:solution_marginal}. We start by rewriting the marginal
likelihood as
\begin{equation}
  \p(y,v\,;\tau) = \int \p(y,v,g,w\,;\tau) \dif g \dif w.
\end{equation}
With this observation, we can see~\eqref{eq:solution_marginal} as a
maximum likelihood problem with latent variables, where the latent variables are
$g$ and $w$. Appealing to the theory of the EM method, we have that iterating
the two steps
\begin{description}
  \item[E-step:] Given an estimate $\hat \tau^{(k)}$ of $\tau$, construct the
    following lower bound of the marginal likelihood
    \begin{equation}\label{eq:Qtrue}
      \hspace{-1em} Q(\tau,\hat \tau^{(k)})\! = \!\int \!\log \p(y,v,g,w\,;\tau)
      \p(g,w|y,v\,;\hat \tau^{(k)})\dif g \dif w;
    \end{equation}
  \item[M-step:] Update the hyperparameter estimates as
    \begin{equation}
      \hat \tau^{(k+1)} = \arg\max_{\tau} Q(\tau,\hat \tau^{(k)});
    \end{equation}
\end{description}
from an arbitrary initial condition $\hat \tau^{(0)}$,
we obtain a sequence of estimates $\{\hat \tau^{(k)}\}$ of increasing
likelihood, which converges to a stationary point of the marginal likelihood of
the data. In practice, this stationary point will always be a local maximum:
saddle points are numerically unstable and minimal perturbations will drive the
sequence of updates away from them~\cite{mclachlan2007algorithm}.

Using the EM method, we have transformed the problem of maximizing the marginal
likelihood into a sequence of optimization problems. The whole point of the EM method is that
these problems should be simpler to solve than the original optimization
problem.

In addition to using the EM method to solve the marginal likelihood problem, we
can rewrite~\eqref{eq:solution_posteriormean} as
\begin{equation}\label{eq:posteriormeans}
   \hat g := \int g\, \p(g,w|y,v\,;\hat\tau) \dif w\dif g,\qquad
   \hat w := \int w\, \p(g,w|y,v\,;\hat\tau) \dif g\dif w.
\end{equation}
Comparing~\eqref{eq:posteriormeans} and the $Q$ function in the E-step, we see
that the solution of Problem~\ref{prob:sysid} using the procedure we have
described depends on expectations with respect to
the distribution $\p(g,w\,| y,v\,;\tau)$. This
distribution is, in general, not available in closed form. In the next section
we present three special cases when this distribution can be computed in closed form
and we present the resulting estimation algorithms. In
Section~\ref{sec:approximations}, we show two different ways to approximate
this joint posterior distribution in the general case.

\section{Cases with degenerate prior distributions}
There are cases where the integrals~\eqref{eq:Qtrue}
and~\eqref{eq:posteriormeans}, required to estimate uncertain-input
systems, admit closed-form solutions. This happens when either the prior for $g$
or for $w$ (or both) are degenerate distributions. This means that,
symbolically, we let the covariances $K_g(\rho)$ and $K_w(\theta)$ go to zero
and, respectively,
\begin{equation}
  \p(g;\rho) \to \delta\del{g-\mu_g(\rho)}, \quad
  \p(w;\theta) \to \delta\del{w-\mu_w(\theta)}.
\end{equation}
From these expressions, we see that the models of the unknown quantities $g$ and
$w$ are uniquely determined by the parameter vector $\tau$ (there is no
uncertainty or variability): therefore, we refer to these kind of models as
\emph{parametric models}. We now present three cases of parametric models that
admit closed form expressions for the EM algorithm.

\subsection{Semiparametric model}
The first model is called \emph{semiparametric}. It is obtained when
$K_w(\theta) \to 0$. This effectively means that the prior
density~\eqref{eq:model_w} collapses into the Dirac density centered around the
mean function, and the posterior distributions of the unknowns admit closed
form expressions:

\begin{lem}\label{lem:pigs_posterior_w}
  Consider the uncertain-input system~\eqref{eq:ui_model}. In the
  limit when $K_w(\theta)\to 0$, we have that
$\p(w|y,v;\tau) = \delta\del{w\!-\!\mu_w(\theta)}$.
\end{lem}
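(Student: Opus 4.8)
The plan is to apply Bayes' rule to the input posterior and then exploit the sifting property of the Dirac density in the limit $K_w(\theta)\to 0$. First I would write
\begin{equation}
  \p(w|y,v;\tau) = \frac{\p(y,v|w;\tau)\,\p(w;\theta)}{\p(y,v;\tau)},
\end{equation}
where $\p(w;\theta)=\N\del{\mu_w(\theta),K_w(\theta)}$ is the Gaussian prior~\eqref{eq:model_w}. The whole statement then amounts to showing that, as the prior covariance $K_w(\theta)\to 0$, the right-hand side converges (weakly) to $\delta\del{w-\mu_w(\theta)}$. The key observation is that the prior is the \emph{only} factor in this expression that depends on $K_w(\theta)$, so the limit acts solely on a Gaussian density that is collapsing onto its mean.

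The second step is to obtain an explicit, $K_w$-independent form for the likelihood $\p(y,v|w;\tau)$. Since $g$ appears neither in the prior for $w$ nor in $\p(v|w;\tau)$, I would marginalize it out using the factorization
\begin{equation}
  \p(y,v,g|w;\tau) = \p(y|g,w;\tau)\,\p(v|w;\tau)\,\p(g;\rho).
\end{equation}
From $y=Wg+\varepsilon$, the output is linear in the Gaussian vector $g$, so the marginal $\p(y|w;\tau)$ is Gaussian with mean $W\mu_g(\rho)$ and covariance $WK_g(\rho)W^T+\sigma_y^2 I_N$; multiplying by $\p(v|w;\tau)$, which is Gaussian with mean $w$ and covariance $\sigma_v^2 I_N$, gives a likelihood $L(w):=\p(y,v|w;\tau)$. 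The essential properties to record here are that $L(w)$ is a continuous function of $w$ (the dependence enters only through the Toeplitz matrix $W$) and that it is strictly positive at $w=\mu_w(\theta)$.

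The final step is to pass to the limit. As $K_w(\theta)\to 0$, the Gaussian prior $\p(w;\theta)$ converges weakly to $\delta\del{w-\mu_w(\theta)}$. Because $L$ is continuous and positive at the mean, the sifting property yields, for the denominator,
\begin{equation}
  \p(y,v;\tau)=\int L(w)\,\p(w;\theta)\dif w \;\longrightarrow\; L\del{\mu_w(\theta)} > 0,
\end{equation}
while in the numerator $L(w)\,\p(w;\theta)\to L\del{\mu_w(\theta)}\,\delta\del{w-\mu_w(\theta)}$. Taking the ratio, the common factor $L\del{\mu_w(\theta)}$ cancels and the posterior collapses to $\delta\del{w-\mu_w(\theta)}$, which is the claim.

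I expect the main obstacle to be the rigorous justification of this weak limit rather than any algebra: one must verify that $L(w)$ is regular enough at $w=\mu_w(\theta)$ for the sifting argument, and that the normalizing constant stays bounded away from zero throughout the limit so that the quotient remains well defined. The only delicate point is that the covariance $WK_g(\rho)W^T+\sigma_y^2 I_N$ varies with $w$ through $W$; I would therefore note that, for finite $w$, this matrix is positive definite with finite determinant, so $L$ is continuous and positive on a neighborhood of $\mu_w(\theta)$, which is exactly what the Dirac sifting requires.
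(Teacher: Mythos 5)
Your proposal is correct and takes essentially the same route as the paper's proof: Bayes' rule for $\p(w|y,v;\tau)$, with the collapsing Gaussian prior acting as a Dirac density whose sifting property evaluates the evidence as $\p(y,v;\tau)=\p(y,v|\mu_w(\theta);\tau)$, so the ratio reduces to $\delta\del{w-\mu_w(\theta)}$. The only difference is that you justify the weak limit rigorously (marginalizing $g$ to exhibit the likelihood and checking its continuity and positivity at $\mu_w(\theta)$), whereas the paper performs the limit formally by replacing the prior with the Dirac density at the outset.
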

\begin{proof}\label{pf:pigs_posterior_w}
  When $K_w(\theta)\to 0$, the prior density becomes the degenerate normal
  distribution $\p(w;\theta) = \delta\big(w-\mu_w(\theta)\big)$.
  From the law of conditional expectation, we have
  \begin{equation}\label{eq:pf_pigs_posterior_w}
    \p(w|y,v;\tau) = \frac{\p(y,v|w;\tau)\delta(w - \mu_w(\theta))}{\p(y,v;\tau)};
\end{equation}
in addition, the evidence becomes
\begin{equation}
  \p(y,v;\tau) = \int \p(y,v|w;\tau)\p(w;\theta)\dif w =
  p(y,v|\mu_w(\theta);\tau).
\end{equation}
Plugging this expression into~\eqref{eq:pf_pigs_posterior_w} we have the
result.
\end{proof}

\begin{lem}\label{lem:pigs_posterior_g}
  Consider the uncertain-input system~\eqref{eq:ui_model}. In the limit when
  $K_w(\theta)\to 0$, the posterior distribution $\p(g|y,v;\tau)$ is Gaussian with
  covariance matrix and mean vector given by
  \begin{equation}\label{eq:pigs_posterior_g}
    P_g   = \del{\frac{1}{\sigma_y^2}{M_w(\theta)}^T M_w(\theta) +
    {K_g(\rho)}^{-1}}^{-1},\qquad
    m_g = P_g\del{\frac{1}{\sigma_y^2}{M_w(\theta)}^T
  y+{K_g(\rho)}^{-1}\mu_g(\rho)},
\end{equation}
where $M_w(\theta):=\T_{N\times N}\big(\mu_w(\theta)\big)$.
\begin{equation}
\end{equation}
\end{lem}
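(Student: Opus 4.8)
The plan is to leverage Lemma~\ref{lem:pigs_posterior_w} to collapse the problem to a standard Gaussian conditioning computation. Since the posterior $\p(w|y,v;\tau)$ is the Dirac density at $\mu_w(\theta)$, I would first write the marginal posterior of $g$ as
\begin{equation}
  \p(g|y,v;\tau) = \int \p(g|y,v,w;\tau)\,\p(w|y,v;\tau)\dif w = \p(g|y,v,w=\mu_w(\theta);\tau).
\end{equation}
Conditioning on the event $w=\mu_w(\theta)$ then replaces the random Toeplitz matrix $W$ in~\eqref{eq:ui_model} by the deterministic matrix $M_w(\theta)=\T_{N\times N}\del{\mu_w(\theta)}$.

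Next I would invoke the conditional-independence structure of~\eqref{eq:ui_model}. Once $w$ is fixed, the input measurement $v=w+\eta$ is a deterministic function of $\eta$, which is independent of $g$ and of $\varepsilon$; hence $v$ is uninformative about $g$ given $w$, so that $\p(g|y,v,w;\tau)=\p(g|y,w;\tau)$. With $w=\mu_w(\theta)$ fixed, the output equation of~\eqref{eq:ui_model} reduces to the linear Gaussian observation $y=M_w(\theta)g+\varepsilon$ with $\varepsilon\sim\N(0,\sigma_y^2 I_N)$, while $g$ retains the Gaussian prior~\eqref{eq:model_g}. This is exactly a Bayesian linear-regression model.

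The core step is then the Gaussian computation. I would express the posterior as proportional to the product of likelihood and prior,
\begin{equation}
  \log\p(g|y,\mu_w;\tau) \cong -\frac{1}{2\sigma_y^2}\|y-M_w(\theta)g\|^2 - \frac{1}{2}\|g-\mu_g(\rho)\|^2_{K_g(\rho)^{-1}},
\end{equation}
and complete the square in $g$. Collecting the quadratic terms gives the precision $P_g^{-1}=\tfrac{1}{\sigma_y^2}M_w(\theta)^T M_w(\theta)+K_g(\rho)^{-1}$, and matching the linear terms yields $m_g=P_g\del{\tfrac{1}{\sigma_y^2}M_w(\theta)^T y+K_g(\rho)^{-1}\mu_g(\rho)}$, which are precisely the expressions in~\eqref{eq:pigs_posterior_g}. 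Equivalently, one may simply quote the standard Gaussian-conditioning formula for a linear model.

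I expect the completion of the square and the matrix algebra to be entirely routine. The only place deserving care is the reduction in the first two paragraphs: justifying that conditioning on the degenerate posterior of $w$ amounts to substituting $w=\mu_w(\theta)$, and that $v$ then drops out of the conditioning. Making the limit $K_w(\theta)\to0$ rigorous rather than purely symbolic would be the main obstacle; it can be handled either by appealing directly to Lemma~\ref{lem:pigs_posterior_w} so that $w$ is deterministic from the outset, or by computing the posterior at finite covariance and verifying that its mean and covariance converge to~\eqref{eq:pigs_posterior_g} as $K_w(\theta)\to0$.
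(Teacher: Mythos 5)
Your proposal is correct and follows essentially the same route as the paper: substitute $w=\mu_w(\theta)$ so the model becomes the linear Gaussian observation $y=M_w(\theta)g+\varepsilon$, then complete the square in $g$ to read off $P_g^{-1}=\tfrac{1}{\sigma_y^2}M_w(\theta)^TM_w(\theta)+K_g(\rho)^{-1}$ and $m_g$. The only difference is that you spell out the reduction step---marginalizing over the Dirac posterior of $w$ from Lemma~\ref{lem:pigs_posterior_w} and noting that $v$ drops out of the conditioning given $w$---which the paper's proof leaves implicit; this is a welcome clarification rather than a different argument.
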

\begin{proof}
  Note that $y|g,w,v;\tau$ is an affine transformation of the 
  Gaussian random variable $\varepsilon$; hence, it is Gaussian. By
  the law of conditional expectation and ignoring terms independent of $g$, we
  have that
  \begin{equation}
    \begin{aligned}
      \log\, \p(g|w,y,v;\tau) &\cong \log p(y|g,w;\tau) + \log p(g;\rho)
                            \cong -\frac{1}{2\sigma_y^2}\enVert{y - M_w(\theta)g}^2 - \frac{1}{2}\enVert{g}^2_{{K_g(\rho)}^{-1}}\\
                            &\cong -\frac{1}{2} \enVert{g}_{P_g^{-1}} + g^T m_g\cong - \frac{1}{2}\enVert{g - m_g}^2_{{P_g}^{-1}}
    \end{aligned}
  \end{equation}
  where $P_g$ and $m_g$ are defined in~\eqref{eq:pigs_posterior_g}.
  Because it is quadratic, the posterior distribution of $g$ is Gaussian,
  with the indicated covariance matrix and mean vector.
\end{proof}

Thanks to Lemma~\ref{lem:pigs_posterior_w} and
Lemma~\ref{lem:pigs_posterior_g}, the E-step can be computed analytically when
$K_w(\theta) = 0$, and
the function $Q(\tau, \hat \tau^{(k)})$ admits a closed-form expression. To
this end,
let $\Delta^{k}$ be the $N$ by $N$ matrix given by
\begin{equation}
  {[\Delta^k]}_{i,j} =
  \begin{cases}{}
    1 & \text{if } i+j-1 = k\\
    0 & \text{otherwise}
  \end{cases}
\end{equation}
and let
\begin{equation}\label{eq:R}
  R = \begin{bmatrix}
    \Delta^1 & \Delta^2 & \Delta^3 & \cdots & \Delta^N
  \end{bmatrix}.
\end{equation}
Then, we have the following result.

\begin{thm}\label{thm:Qpigs}
  Consider a semiparametric uncertain-input model with $K_w(\theta) = 0$.
  Let $\hat \tau^{(k)}$ be estimates of the hyperparameters at the $k$th
  iteration of the EM method and let $\hat g^{(k)}$ and $\hat P_g^{(k)}$ be the moments
  in~\eqref{eq:pigs_posterior_g} when $\tau=\hat \tau^{(k)}$. Define
  \begin{equation}
    \begin{aligned}
    \hat R_y(\theta) &= \enVert[1]{ y - \hat G^{(k)}\mu_w(\theta) }^2,\\
    \hat R_v(\theta) &= \enVert[1]{v - \mu_w(\theta)}^2,\\
    \hat S_g^{(k)} &=  R^T \del[1]{I_N \otimes \hat P_g^{(k)}} R;
    \end{aligned}
  \end{equation}
  where $\hat G^{(k)} = T_{N\times N}(\hat g^{(k)})$. 
  Then, the function $Q(\tau,\hat \tau^{(k)})$ is
  given by
   \begin{equation}\label{eq:Qpigs}
     \begin{aligned}
       Q(\tau,\hat \tau^{(k)}) &= -\frac{1}{2\sigma_v^2}\hat R_v(\theta) - \frac{N}{2}\log \sigma_v^2
     -\frac{1}{2\sigma_y^2}\del{\hat R_y(\theta) + \enVert{\mu_w(\theta)}_{
     \hat S_g^{(k)}}^2} 
     - \frac{N}{2}\log
     \sigma_y^2\\
     &- \frac{1}{2} \enVert{\hat g^{(k)} - \mu_g(\rho)}^2_{{K_g(\rho)}^{-1}} - \frac{1}{2} \trace\cbr{{K_g(\rho)}^{-1}\hat P_g^{(k)}} -
     \frac{1}{2}\log\det K_g(\rho).
     \end{aligned}
   \end{equation}
\end{thm}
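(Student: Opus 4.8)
The plan is to use the degeneracy of the prior for $w$ to eliminate it as a latent variable, and then to perform an ordinary Gaussian E-step in the single remaining latent variable $g$. When $K_w(\theta)\to0$ the ``prior'' $\p(w;\theta)=\delta\del{w-\mu_w(\theta)}$ becomes a deterministic relationship, so the marginal likelihood factors as $\p(y,v;\tau)=\p\del{v\mid\mu_w(\theta);\tau}\int\p\del{y\mid g,\mu_w(\theta);\tau}\p(g;\rho)\dif g$, and the only quantity that is actually integrated out---the only latent variable---is $g$. Lemma~\ref{lem:pigs_posterior_w} confirms this by giving the point-mass posterior $\delta\del{w-\mu_w(\theta)}$, while Lemma~\ref{lem:pigs_posterior_g} gives the Gaussian posterior $\N\del{\hat g^{(k)},\hat P_g^{(k)}}$ for $g$ at $\hat\tau^{(k)}$. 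The E-step integral~\eqref{eq:Qtrue} therefore collapses to $Q(\tau,\hat\tau^{(k)})=\E\big[\log\p(y,v,g;\tau)\mid y,v;\hat\tau^{(k)}\big]$, taken over $g$, with $w$ replaced everywhere by $\mu_w(\theta)$ and hence $W$ by $M_w(\theta)$. I regard this reduction as the one conceptually delicate point: if one instead (incorrectly) retains $\log\p(w;\theta)$ as a separate additive term and averages the data-fit terms against the degenerate posterior, one obtains the \emph{previous} iterate $\mu_w(\hat\theta^{(k)})$ in $\hat R_v$ and $\hat R_y$, whereas the statement correctly features the \emph{new} $\theta$; the resolution is exactly that $w$ is not a latent variable here.

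With this reduction in hand, I would factor the complete-data log-likelihood using the independence in~\eqref{eq:ui_model} together with $y=M_w(\theta)g+\varepsilon$, obtaining
\begin{equation}
  \log\p(y,v,g;\tau)\cong -\frac{1}{2\sigma_y^2}\enVert{y-M_w(\theta)g}^2 -\frac{N}{2}\log\sigma_y^2 -\frac{1}{2\sigma_v^2}\enVert{v-\mu_w(\theta)}^2 -\frac{N}{2}\log\sigma_v^2 -\frac{1}{2}\enVert{g-\mu_g(\rho)}^2_{K_g(\rho)^{-1}} -\frac{1}{2}\log\det K_g(\rho),
\end{equation}
and then take the expectation over $g\sim\N\del{\hat g^{(k)},\hat P_g^{(k)}}$ term by term. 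The $v$-term carries no $g$ and passes straight through as $-\tfrac{1}{2\sigma_v^2}\hat R_v(\theta)-\tfrac N2\log\sigma_v^2$. For the prior term I would use $\E\enVert{g-\mu_g(\rho)}^2_{K_g(\rho)^{-1}}=\enVert{\hat g^{(k)}-\mu_g(\rho)}^2_{K_g(\rho)^{-1}}+\trace\del{K_g(\rho)^{-1}\hat P_g^{(k)}}$, which produces the last three summands of~\eqref{eq:Qpigs}. For the $y$-term I would use $\E\enVert{y-M_w(\theta)g}^2=\enVert{y-M_w(\theta)\hat g^{(k)}}^2+\trace\del{M_w(\theta)^TM_w(\theta)\hat P_g^{(k)}}$ together with the Toeplitz property~\eqref{eq:toeplitz_property}, which gives $M_w(\theta)\hat g^{(k)}=\hat G^{(k)}\mu_w(\theta)$ and hence identifies the mean part with $\hat R_y(\theta)$.

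The main obstacle is the covariance part of the $y$-term, $\trace\del{M_w(\theta)^TM_w(\theta)\hat P_g^{(k)}}$, which must be shown to equal $\enVert{\mu_w(\theta)}^2_{\hat S_g^{(k)}}$. The key is again~\eqref{eq:toeplitz_property}: writing $\tilde g:=g-\hat g^{(k)}$ and $\tilde G:=\T_{N\times N}(\tilde g)$ one has $M_w(\theta)\tilde g=\tilde G\mu_w(\theta)$, so that
\begin{equation}
  \trace\del{M_w(\theta)^TM_w(\theta)\hat P_g^{(k)}} =\E\enVert{M_w(\theta)\tilde g}^2 =\mu_w(\theta)^T\,\E\big[\tilde G^T\tilde G\big]\,\mu_w(\theta),
\end{equation}
which moves the $\mu_w(\theta)$-dependence outside the Toeplitz matrix. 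It then remains to verify the purely algebraic identity $\E\big[\tilde G^T\tilde G\big]=R^T\del{I_N\otimes\hat P_g^{(k)}}R$. I would do this entrywise: a short computation gives $\big[\E\,\tilde G^T\tilde G\big]_{m,m'}=\sum_i\big[\hat P_g^{(k)}\big]_{i-m+1,\,i-m'+1}$ (over the valid index range), and one checks that the anti-diagonal selector matrices $\Delta^k$ assembled in $R$, together with the Kronecker factor $I_N\otimes\hat P_g^{(k)}$, reproduce exactly this diagonal-wise summation of the entries of $\hat P_g^{(k)}$. I expect this index bookkeeping---tracking the vectorization convention and the summation ranges induced by the definition of $\Delta^k$ and $R$---to be the only genuinely fiddly step; everything else is routine Gaussian algebra, and the asserted equality holds modulo the additive constant $-\tfrac{3N}{2}\log 2\pi$ that the EM $Q$-function discards.
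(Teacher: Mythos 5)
Your proposal is correct and follows essentially the same route as the paper's proof: both eliminate $w$ through the Dirac sampling property so that $g$ is the sole latent variable, take the Gaussian expectation of the complete-data log-likelihood term by term, and use the Toeplitz swap $Wg=Gw$ together with the Kronecker structure of $R$ to identify the mean and covariance contributions of the quadratic $y$-term. The only cosmetic difference is that the paper gets your identity $\E\big[\tilde G^T\tilde G\big]=R\del[1]{I_N\otimes\hat P_g^{(k)}}R^T$ in one stroke from $G^T=R(I_N\otimes g)$ rather than by entrywise bookkeeping, which incidentally shows that the transpose placement in the theorem's definition of $\hat S_g^{(k)}$ (and in your statement of the identity) is a typo inherited from the paper: as written, $R^T\del[1]{I_N\otimes\hat P_g^{(k)}}R$ is not dimensionally consistent, and the correct form is $R\del[1]{I_N\otimes\hat P_g^{(k)}}R^T$.
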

\begin{proof}
  See Appendix~\ref{pf:Qpigs}.
\end{proof}

From~\eqref{eq:Qpigs}, we see that the optimization with respect to $\theta$ is
not independent of $\sigma_y^2$ and $\sigma_v^2$. Therefore, to update the
hyperparameter $\theta$, we use a conditional-maximization
step~\cite{meng1993maximum}, where we keep the noise variances fixed
to their values at the previous iterations. The use of the conditional
maximization step allows us to write the updates of the EM method in closed
form:

\begin{cor}\label{cor:Qpigs}
  At the $k$th iteration of the EM method, the parameters can be updated as
  \begin{equation}
  \begin{aligned}
    \hat \rho^{(k+1)} &= \arg\min_{\rho}
  \trace\cbr[1]{{K_g(\rho)}^{-1}\hat P_g^{(k)}} + \log \det K_g(\rho)\\
  &+ \enVert[1]{\hat g^{(k)}-\mu_g(\rho)}^2_{K_g^{-1}(\rho)},\\
  \hat\theta^{(k+1)}& = \arg \min_\theta\!
  \frac{\hat R_v(\theta)}{2\hat \sigma_v^{(k)\,2}} + 
  \frac{1}{2\hat\sigma_y^{(k)\,2}}\!\del[2]{\!
    \hat R_y(\theta)\!+\! \enVert{\mu_w(\theta)}^2_{
    \hat S_g^{(k)}}\!}\!,\\
    \hat \sigma_y^{(k+1)} &= \frac{1}{N}\del{\hat R_y(\hat \theta^{(k+1)}) +
    \enVert[1]{\mu_w(\hat \theta^{(k+1)})}^2_{S_g^{(k)}}},\\
    \hat \sigma_v^{(k+1)} &= \frac{1}{N}\hat R_v(\hat \theta^{(k+1)})\,.
\end{aligned}
\end{equation}
\end{cor}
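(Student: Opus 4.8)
The plan is to read the updates directly off the closed-form expression \eqref{eq:Qpigs} for $Q(\tau,\hat\tau^{(k)})$ established in Theorem~\ref{thm:Qpigs}, exploiting the way the hyperparameters decouple there. The first observation is that $Q$ splits additively into a block depending only on $\rho$---the last three terms, involving $K_g(\rho)$ and $\mu_g(\rho)$---and a block depending only on $(\theta,\sigma_y^2,\sigma_v^2)$. Hence the maximization over $\rho$ can be carried out independently of the remaining hyperparameters. Collecting the $\rho$-dependent terms and dropping the common factor $-1/2$, maximizing $Q$ over $\rho$ is equivalent to minimizing $\enVert{\hat g^{(k)}-\mu_g(\rho)}^2_{{K_g(\rho)}^{-1}} + \trace\cbr{{K_g(\rho)}^{-1}\hat P_g^{(k)}} + \log\det K_g(\rho)$, which is exactly the stated $\hat\rho^{(k+1)}$ update.

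The remaining block couples $\theta$ with the two noise variances, because $\theta$ enters only through the quantities $\hat R_v(\theta)$, $\hat R_y(\theta)$, and $\enVert{\mu_w(\theta)}^2_{\hat S_g^{(k)}}$, each divided by a variance. A joint maximization over $(\theta,\sigma_y^2,\sigma_v^2)$ does not in general admit a closed form, and this is the crux of the argument. To circumvent it, I would invoke the conditional-maximization principle~\cite{meng1993maximum}: replace the single M-step by a sequence of maximizations along coordinate blocks, each of which still increases $Q$ and therefore preserves the monotonic-ascent guarantee of the EM method. Concretely, first maximize over $\theta$ with the variances held at their previous values $\hat\sigma_v^{(k)\,2}$ and $\hat\sigma_y^{(k)\,2}$; the variance scalings are then constants, and discarding $\theta$-independent terms shows that maximizing $Q$ reduces to minimizing $\hat R_v(\theta)/(2\hat\sigma_v^{(k)\,2}) + \del{\hat R_y(\theta)+\enVert{\mu_w(\theta)}^2_{\hat S_g^{(k)}}}/(2\hat\sigma_y^{(k)\,2})$, which is the stated $\hat\theta^{(k+1)}$ update.

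Finally, I would fix $\theta=\hat\theta^{(k+1)}$ and maximize $Q$ over each variance separately. Writing $s=\sigma_y^2$, the relevant part of $Q$ has the scalar form $-a/(2s)-(N/2)\log s$ with $a=\hat R_y(\hat\theta^{(k+1)})+\enVert{\mu_w(\hat\theta^{(k+1)})}^2_{\hat S_g^{(k)}}$; differentiating in $s$, setting the derivative to zero, and solving gives the stationary point $s=a/N$, which is a maximum since the second derivative is negative there. The identical computation with $a=\hat R_v(\hat\theta^{(k+1)})$ yields the $\sigma_v^2$ update. These variance calculations are entirely routine; the only genuine step is the appeal to conditional maximization to break the coupling between $\theta$ and the variances, which is what delivers the closed-form updates while retaining the convergence properties of the EM method.
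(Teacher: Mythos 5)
Your proof is correct and follows essentially the same route as the paper: read the updates off the closed form~\eqref{eq:Qpigs}, exploit the decoupling of the $\rho$-block, and use a conditional-maximization step (in the sense of~\cite{meng1993maximum}) to break the coupling between $\theta$ and the noise variances---first updating $\theta$ with the variances fixed at $\hat\sigma_v^{(k)\,2}$, $\hat\sigma_y^{(k)\,2}$, then solving for the variances in closed form at $\theta=\hat\theta^{(k+1)}$. The paper's proof states exactly this two-step maximization, only more tersely; your added detail (the scalar stationarity computation $-a/(2s)-(N/2)\log s \Rightarrow s=a/N$) is a correct elaboration rather than a different argument.
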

\begin{proof}
  Follows from the two-step maximization of~\eqref{eq:Qpigs}: first, maximize
  with respect to $\theta$ and $\rho$ keeping $\sigma_y^2$ and $\sigma_v^2$
  fixed to their values at the previous iteration; then, maximize with respect
  to $\sigma_y^2$ and $\sigma_v^2$ using the updated values of the
  hyperparameters. 
\end{proof}

Thanks to Corollary~\ref{cor:Qpigs}, we have a simple way to compute the EM
estimates of the kernel hyperparameters and of the noise variances for
semiparametric models with $K_w(\theta) = 0$: starting
from an initial value of the unknown parameters, we first update the
hyperparameters $\rho$ and
$\theta$, then we use the new values to update the noise variances $\sigma_y^2$
and $\sigma_v^2$. Under mild regularity conditions, this procedure yields a
sequence of estimates that converges to a local maximum of the marginal
likelihood~(it is a Generalized EM sequence, see~\cite{wu1983convergence}).

\begin{rem}
  In this section, we have presented the case when $K_w(\theta)\to 0$. However,
  thanks to the symmetry of the model assured by~\eqref{eq:toeplitz_property},
  the same kind of algorithm works when $K_g(\rho)\to 0$ (by exchanging the
  roles of $g$ and $w$).
\end{rem}

\subsection{Parametric model}
In case we let both the input and the system covariance matrices go to zero,
all the variability in the model is removed, and we are left with classical
\emph{parametric} models. In this case, the marginal likelihood of the data collapses
into the likelihood where the impulse response and the input are
replaced with the parametric models $\mu_g(\rho)$ and $\mu_w(\theta)$
\begin{equation}
  \p(y,v;\rho,\theta,\sigma^2) = \int\!\!
  \p(y,v|g,w;\sigma^2)\p(g;\rho)\p(w;\theta)\dif g\dif w
  =\p(y,v|\,g\!=\!\mu_g(\rho),\,w\!=\!\mu_w(\theta);\,\sigma^2).
\end{equation}
In other words, the marginal likelihood of the data is the distribution of the
data conditioned on the events $g=\mu_g(\rho)$ and $w=\mu_w(\theta)$. This
distribution is given in closed form by
\begin{equation}\label{eq:Qparametric}
  \log \p(y,v|\mu_g(\rho),\mu_w(\theta);\sigma^2) =
  -\frac{1}{2\sigma_y^2}\enVert{y-M_w(\theta)\mu_g(\rho)}^2
  -\frac{N}{2}\log\sigma_y^2
  -\frac{1}{2\sigma_v^2}\enVert{v-\mu_w(\theta)}^2-\frac{N}{2}\log \sigma_v^2.
\end{equation}
where $M_w(\theta)$ is the Toeplitz matrix of $\mu_w(\theta)$.

In this parametric-model case, we have that the posterior means reduce to the
prior means and the maximum marginal-likelihood criterion collapses into the
classical maximum-likelihood or prediction-error estimation method. To
estimate the system, we first maximize~\eqref{eq:Qparametric} to find the
parameter values $\hat \tau$; then, we estimate the system with
\begin{equation}
  \hat g = \mu_g(\hat\rho),\quad \hat w=\mu_w(\hat \theta).
\end{equation}

The strategy to maximize~\eqref{eq:Qparametric} depends on the specific
structure of the problem. In some applications, concentrated-likelihood or
integrated-likelihood approaches have been proposed (for a review,
see~\cite{berger1999integrated}). An interesting consistent approach, for the
parametric EIV case, has been proposed in~\cite{zhang2015errors}.
In~\cite{bai2004convergence}, the authors show that if $g$ and $w$ are linearly
parameterized, alternating between estimation of $g$ and of $w$ leads to the
minimum of~\eqref{eq:Qparametric}.

\begin{rem}
  The EM based algorithm presented in Section~\ref{sec:estimation} cannot be
  used in the parametric model case because of the impulsive posterior
  distributions: during the M-step, the method is overconfident in the current
  value of the parameters and no update occurs. However, the EM method can be
  used in the parametric case by considering a covariance matrix that shrinks
  toward zero at every iteration.
\end{rem}

\section{Approximations of the joint posterior
distribution}\label{sec:approximations}
In the previous section, we have shown three cases in which the collapse of the prior
distribution allows us to express the marginal likelihood of the data and the
posterior distributions in closed form. In general, however, these distributions do not
have a closed form expression. Therefore, in this section, we present two ways to
approximate the joint posterior distribution
$\p(g,w\,| y,v\,;\tau)$.
In the first, we make a particle approximation. The particles are drawn
from the joint posterior using an MCMC method. In the second, we make a
variational approximation of the joint posterior.

\subsection{Markov Chain Monte Carlo integration}\label{sec:mcmc}
Monte Carlo methods are built around the concept of \emph{particle
approximation}. In a particle approximation method, a density with a complicated
functional form is approximated with a set of point probabilities---that is, we
approximate a density $\p(x)$ according to
\begin{equation}
  \p(x) \approx \frac{1}{M}\sum_{j=1}^M \delta(x-x_j).
\end{equation}
If the particle locations $x_j$ are drawn from $\p(x)$, and the number of
particles $M$ is large enough, the expectation of any measurable function $f(x)$
over any set can be approximated as
\begin{equation}\label{eq:monte_carlo_integral}
  \E\{f(x)\} = \int\!f(x)\,\p(x) \dif x \approx \frac{1}{M}\sum_{j=1}^M f(x_j),
\end{equation}
where $\{x_j\}$ are drawn from  $\p(x)$. This result comes directly from the sampling
property of the Dirac density $\delta(\,\cdot\,)$. From a different
perspective, we can see~\eqref{eq:monte_carlo_integral} as an estimation of the
true expectation. With this interpretation, we have that this estimator is
unbiased,
\begin{equation}
  \E\Bigg\{ \frac{1}{M}\sum_{j=1}^M
   f(x_j)\Bigg\} = \E\{f(x)\},
\end{equation}
and its covariance is inversely proportional to the number of samples used,
\begin{equation}
  \cov\Bigg\{ \frac{1}{M}\sum_{j=1}^M
  f(x_j)\Bigg\} = \frac{1}{M}\cov\{f(x)\}.
\end{equation}
In practice, the number of samples needed depends on the specific application: in
certain applications, few particles (say 10 or 20) may suffice; in other
applications, we might need a much larger number of particles (in the order of
thousands; for a complete treatment, see~\cite[Chapter~11]{bishop2006pattern}).

When implementing Monte Carlo integrations, a common approach is MCMC\@. In
these methods, we set up a Markov chain whose stationary distribution is the
distribution we want to approximate and we run it to
collect samples~\cite{gilks1996markov}.

One convenient way to create a Markov chain is \emph{Gibbs sampling}. Using
this method, we obtain a particle approximation of a joint distribution (called
the \emph{target distribution}) by
sampling from all the \emph{full conditional} distributions---the distribution of
one random variable conditioned on all other variables---in sequence. This
procedure results in a Markov chain that has the target distribution as its
stationary distribution. Contrary to many other sampling methods, Gibbs
sampling does not include a rejection step; this means that the samples
proposed at every step are accepted as samples from the chain. This may lead to
faster mixing and decorrelation of the chain compared to other MCMC
methods~\cite[Chapter~11]{bishop2006pattern}.

The main drawback with Gibbs sampling is that we must sample the full
conditional distributions of all variables. Therefore, it is only applicable if
these distributions have a functionally convenient form. In the case at hand,
we have the following results.
\begin{lem}\label{lem:cond_g} Consider the uncertain-input model~\eqref{eq:ui_model}. The
  density $\p(g|y,w;\tau)$ is Gaussian with covariance
  matrix and mean vector given by
\begin{equation}\label{eq:conditional_g_pars}
    P_g = {\left(\frac{W^T W}{\sigma_y^2} +
    {K_g(\rho)}^{-1}\right)}^{-1},\qquad
    m_g =  P_g\left(\frac{W^T y}{\sigma_y^2} +
  {K_g(\rho)}^{-1}\mu_g(\rho)\right).
\end{equation}
\end{lem}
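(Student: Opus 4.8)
The plan is to recognize Lemma~\ref{lem:cond_g} as a standard Gaussian-linear conditioning computation, essentially identical to the proof of Lemma~\ref{lem:pigs_posterior_g}, with the key observation that conditioning on $w$ makes the Toeplitz matrix $W=\T_{N\times N}(w)$ a fixed, known quantity. The whole difficulty, such as it is, reduces to completing the square in $g$.

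First I would observe that, conditionally on $w$, the output equation $y=Wg+\varepsilon$ is an affine transformation of the Gaussian noise $\varepsilon\sim\N(0,\sigma_y^2 I_N)$; hence $y|g,w;\tau$ is Gaussian with mean $Wg$ and covariance $\sigma_y^2 I_N$. Since the prior $g\sim\N(\mu_g(\rho),K_g(\rho))$ is independent of $w$, the pair $(g,y)$ is jointly Gaussian given $w$, so the conditional $\p(g|y,w;\tau)$ is Gaussian and it suffices to identify its first two moments.

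To do so, I would apply Bayes' rule in logarithmic form and, ignoring terms independent of $g$, expand
\[
\log\p(g|y,w;\tau)\cong\log\p(y|g,w;\tau)+\log\p(g;\rho)
\cong-\frac{1}{2\sigma_y^2}\enVert{y-Wg}^2-\frac{1}{2}\enVert{g-\mu_g(\rho)}^2_{{K_g(\rho)}^{-1}}.
\]
Collecting the quadratic and linear terms in $g$ gives the inverse covariance $P_g^{-1}=W^TW/\sigma_y^2+{K_g(\rho)}^{-1}$ and the linear coefficient $W^Ty/\sigma_y^2+{K_g(\rho)}^{-1}\mu_g(\rho)$, so that the expression rewrites as $-\tfrac12\enVert{g-m_g}^2_{P_g^{-1}}$ with $P_g$ and $m_g$ exactly as in~\eqref{eq:conditional_g_pars}. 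Because the log-density is quadratic in $g$, the posterior is Gaussian with these moments, which proves the claim.

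I do not anticipate any genuine obstacle here; the computation is routine once $W$ is treated as fixed. The only point worth flagging is why the measurement $v$ is absent from the conditioning: since $v=w+\eta$ with $\eta$ independent of $(g,w)$, knowing $v$ in addition to $w$ provides no further information about $g$, so $\p(g|y,v,w;\tau)=\p(g|y,w;\tau)$. This is convenient in the Gibbs sampler, but it is not even needed for the proof, which conditions only on $y$ and $w$ as stated.
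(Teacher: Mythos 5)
Your proof is correct and matches the paper's approach: the paper proves Lemma~\ref{lem:cond_g} by pointing to the argument of Lemma~\ref{lem:pigs_posterior_g}, which is precisely the completion-of-squares in $\log\p(g|y,w;\tau)\cong\log\p(y|g,w;\tau)+\log\p(g;\rho)$ with $W$ treated as fixed, exactly as you do. Your closing remark on why $v$ drops out of the conditioning is a correct and useful addition, though not needed for the claim as stated.
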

\begin{proof}
  The proof follows along the same line of reasoning as the proof of Lemma~\ref{lem:pigs_posterior_g}.
\end{proof}

\begin{lem}\label{lem:cond_w} Consider the uncertain-input model~\eqref{eq:ui_model}. The
  density $\p(w|y,v,g;\tau)$ is Gaussian with covariance matrix
  and mean given by
\begin{equation}\label{eq:conditional_w_pars}
    P_w = {\left(\frac{G^T G}{\sigma_y^2} + \frac{I_N}{\sigma^2_v} +
  {K_w(\theta)}^{-1}\right)}^{-1},\qquad
 m_w =P_w\left(\frac{G^T y}{\sigma_y^2} + \frac{v}{\sigma_v^2} +
 {K_w(\theta)}^{-1}\mu_w(\theta)\right).
\end{equation}
\end{lem}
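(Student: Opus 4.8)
The density $\p(w|y,v,g;\tau)$ is Gaussian with the covariance matrix $P_w$ and mean $m_w$ given in the displayed formula.

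Let me sketch how I'd prove this.

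The plan is to mirror exactly the reasoning used in Lemma~\ref{lem:pigs_posterior_g}, applied now to $w$ instead of $g$ and with the crucial extra twist that $w$ appears in two of the model equations, not just one. The key structural fact I would exploit is the Toeplitz identity~\eqref{eq:toeplitz_property}, namely $Wg = Gw$. This lets me rewrite the output equation $y = Wg + \varepsilon$ as $y = Gw + \varepsilon$, so that, conditioned on $g$ (and hence on the Toeplitz matrix $G$), the output $y$ becomes an affine function of $w$ plus Gaussian noise. This is the observation that makes the whole computation Gaussian and is the direct analogue of the ``affine transformation of a Gaussian'' remark opening the proof of Lemma~\ref{lem:pigs_posterior_g}.

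The main steps, in order, are as follows. First I would invoke Bayes' rule (the law of conditional expectation) to write, up to an additive constant in $\log$,
\begin{equation}
  \log \p(w|y,v,g;\tau) \cong \log \p(y|w,g;\tau) + \log \p(v|w;\tau) + \log \p(w;\theta),
\end{equation}
using that $g$, $w$, $\varepsilon$, and $\eta$ are mutually independent so that the prior on $w$ is unaffected by conditioning on $g$, and that $y$ depends on $w$ only through $G$ once $g$ is fixed. Second, I would substitute the three Gaussian log-densities. Using $y = Gw + \varepsilon$ with $\varepsilon \sim \N(0,\sigma_y^2 I_N)$, the likelihood contributes $-\frac{1}{2\sigma_y^2}\enVert{y - Gw}^2$; the input-measurement equation $v = w + \eta$ with $\eta \sim \N(0,\sigma_v^2 I_N)$ contributes $-\frac{1}{2\sigma_v^2}\enVert{v - w}^2$; and the prior~\eqref{eq:model_w} contributes $-\frac{1}{2}\enVert{w - \mu_w(\theta)}^2_{{K_w(\theta)}^{-1}}$. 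Third, I would collect these into a single quadratic form in $w$, reading off the coefficient of the quadratic term as $P_w^{-1} = \frac{G^T G}{\sigma_y^2} + \frac{I_N}{\sigma_v^2} + {K_w(\theta)}^{-1}$ and the linear term as $P_w^{-1} m_w = \frac{G^T y}{\sigma_y^2} + \frac{v}{\sigma_v^2} + {K_w(\theta)}^{-1}\mu_w(\theta)$, thereby completing the square exactly as in Lemma~\ref{lem:pigs_posterior_g}. Since the resulting log-density is quadratic in $w$, the conditional distribution is Gaussian with the stated moments.

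The only real subtlety—more a bookkeeping point than a genuine obstacle—is keeping track of the three separate quadratic contributions, one more than in Lemma~\ref{lem:pigs_posterior_g}, and in particular recognizing that the $v$-equation supplies the extra $\frac{I_N}{\sigma_v^2}$ term in the precision and the $\frac{v}{\sigma_v^2}$ term in the linear part. Because the authors already state in the proof of Lemma~\ref{lem:cond_g} that ``the proof follows along the same line of reasoning as the proof of Lemma~\ref{lem:pigs_posterior_g},'' I expect the cleanest write-up here to be an equally brief pointer: the argument is identical in spirit, with the substitution $Wg = Gw$ turning the output equation into an affine map of $w$ and the $v$-equation furnishing the second likelihood factor. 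I would therefore present the completed-square computation compactly and note that positivity of $P_w^{-1}$ (hence invertibility) is automatic since it is a sum of ${K_w(\theta)}^{-1}$ and two positive-semidefinite Gram matrices.
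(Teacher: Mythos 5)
Your proposal is correct and follows essentially the same route as the paper's own proof: factor the conditional density via conditional independence of $y$ and $v$ given $(g,w)$ into the product $\p(y|g,w;\sigma_y^2)\,\p(v|w;\sigma_v^2)\,\p(w;\theta)$, substitute the three Gaussian log-densities (using $Wg=Gw$ so that $y$ is affine in $w$ given $g$), and complete the square in $w$ to read off $P_w$ and $m_w$. Your explicit appeal to the Toeplitz identity~\eqref{eq:toeplitz_property} and the closing remark on invertibility of $P_w^{-1}$ are details the paper leaves implicit, but they do not change the argument.
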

\begin{proof}
  Because $y$ and $v$ are conditionally independent given $w$ and $g$, we have
  that
  \begin{equation}
    \begin{aligned}
      &\log\p(w|y,v,g;\tau) \cong
      \log\del{\p(y|g,w;\sigma_y^2)\p(v|w;\sigma_v^2)\p(w;\theta)}\\
          &\cong -\frac{1}{2\sigma_y^2}\enVert{y \!-\! Gw}^2
          \!\!-\!\frac{1}{2\sigma_v^2}\enVert{v \!-\! w}^2\! \!-\! \frac
      {1}{2} \enVert{w \!-\! \mu_w(\theta)}^2_{{K_w(\theta)}^{-1}}\\
      &\cong -\frac{1}{2}\enVert{w}^2_{P_w^{-1}}
      + w^T m_w \cong -\frac{1}{2} \enVert{w - m_w}_{P_w^{-1}}^2
    \end{aligned}
  \end{equation}
  where $P_w$ and $m_w$ are given in~\eqref{eq:conditional_w_pars}. The
  log-density of $w|y,v,g,w;\tau$ is quadratic and, hence, it is Gaussian with
  the indicated mean vector and covariance matrix.
\end{proof}

\begin{rem}\label{rem:general_mc}
  In case we consider the more general Gaussian process
  model~\eqref{eq:gp_joint}, where $g$ and $w$ are a priori dependent,
  Lemma~\ref{lem:cond_g} and Lemma~\ref{lem:cond_w} still hold with slightly
  modified expressions for the mean vectors and covariance matrices (to account
  for the prior correlation). For instance, the conditional density of $g$ is
  Gaussian with covariance matrix and mean vector given by
  \begin{equation}
    P_g \!=\! {\left(\frac{W^T W}{\sigma_y^2} +
    \Lambda_g(\rho,\theta)\right)}^{-1},\qquad
    m_g \!=\!  P_g\!\left( \!\frac{W^T y }{\sigma_y^2} +
    \Lambda_g(\rho,\theta)\mu_g(\rho) + \Lambda_{gw}(\rho,\theta)(w \!-\!
  \mu_w(\theta))\!\right)\!,
\end{equation}
where $\Lambda_{gw}(\rho,\theta)$ and $\Lambda_{g}(\rho,\theta)$ are,
respectively, the lower left and right blocks of the inverse of the prior
covariance matrix.
\end{rem}

In view of Lemma~\ref{lem:cond_g} and Lemma~\ref{lem:cond_w}, we can easily
set up the Gibbs sampler to draw from the joint posterior distribution: from any
initialization of the impulse response $g^{(0)}$ and of the input signal $w^{(0)}$, we sample
\begin{equation}\label{eq:gibbs}
  \begin{aligned}
    g^{(j+1)}&|w^{(j)},y,v;\tau \sim \N(m_g^{(j)},P_g^{(j)}),\\
    w^{(j+1)}&|g^{(j+i)},y,v;\tau  \sim \N(m_w^{(j)},P_w^{(j)}).
  \end{aligned}
\end{equation}
where $m_g^{(j)}$ and $P_g^{(j)}$ are the mean and covariance
in~\eqref{eq:conditional_g_pars} when $w=w^{(j)}$, and where $m_w^{(j)}$ and
$P_w^{(j)}$ are the mean and covariance in~\eqref{eq:conditional_w_pars} when
$g=g^{(j+1)}$.

Because it is a Markov chain, the samples drawn using~\eqref{eq:gibbs} are
correlated, and subsequent samples have memory about the initial conditions and
are far away from the stationary distribution (which is equal to the target
distribution). Therefore, we discard the first samples of the Markov chain, and
we only retain the $M$ samples after a \emph{burn-in} of $B$ samples:
\begin{equation}\label{eq:burnin}
  \bar g^{(j)} = g^{(j+B)}, \quad
  \bar w^{(j)} = w^{(j+B)}, \quad j = 1,\ldots,M.
\end{equation}
If the burn-in is large enough, the Markov chain has lost its memory
about the initial conditions and is producing samples that come form the
stationary distribution. The choice of the length of the burn-in is a
difficult problem, and some heuristic algorithms have been
proposed (see~\cite[Section~1.4.6]{gilks1996markov}).

When we have drawn enough samples from
the Markov chain, we compute the Monte Carlo estimate of the function $Q$; in
other words, we replace the E-step in the EM method with a Monte Carlo E-step
(this is sometimes known as the MCEM method; see~\cite{wei1990monte}). We create the approximate lower
bound (at the $k$th iteration of the EM method) by setting
\begin{equation}
  Q^{\mc}(\tau,\hat \tau^{(k)}) = \frac{1}{M_k} \sum_{j=1}^{M_k} \log \p(y,v,\bar g^{(j,k)},
    \bar w^{(j,k)};\tau)
\end{equation}
where $\bar g^{(j,k)}$ and $\bar w^{(j,k)}$ are samples from the stationary
distribution of~\eqref{eq:gibbs} at the $k$th iteration of the EM method. In
the uncertain-input case, the function $Q^{\mc}$ is available in closed form as
a function of the sample moments of $g$ and $w$.
\begin{thm}\label{thm:Qmc}
  Let $\cbr[1]{\bar g^{(j,k)}}_{j=1}^{M_k}$ and
  $\cbr[1]{\bar w^{(j,k)}}_{j=1}^{M_k}$ be
  samples from the stationary distribution of the Gibbs
  sampler~\eqref{eq:gibbs} at the $k$th
  iteration of the EM method and define
  \begin{equation}\label{eq:mcem_moments}
    \begin{aligned}
      \hat g^{(k)} &= \frac{1}{M_k}\sum_{j=1}^{M_k} \bar g^{(j,k)},\quad \hat w^{(k)}
      = \frac{1}{M_k}\sum_{j=1}^{M_k} \bar w^{(j,k)},\\
      \hat P_g^{(k)} &= \frac{1}{M_k}\sum_{j=1}^{M_k} \del{\bar g^{(j,k)} -\hat g^{(k)}}\del{\bar g^{(j,k)} - \hat g^{(k)}}^T\\
      \hat P_w^{(k)}  &= \frac{1}{M_k}\sum_{j=1}^{M_k} \del{\bar w^{(j,k)} - \hat w^{(k)}}\del{
      \bar w^{(j,k)} - \hat w^{(k)}}^T\\
      \hat R_v^{(k)} &= \frac{1}{M_k}\sum_{j=1}^{M_k} \enVert{v - \bar w^{(j,k)}}^2,\\
      \hat R_y^{(k)} &= \frac{1}{M_k}\sum_{j=1}^{M_k} \enVert{y - \bar G^{(j,k)}\bar w^{(j,k)}}^2.
\end{aligned}
  \end{equation}
  Then, the function $Q^{\mc}(\tau, \hat \tau^{(k)})$ is given by
  \begin{equation}\label{eq:Qmc}
    \mathmakebox[0.88\columnwidth][l]{\begin{aligned}
        \!Q^{\mc}(\tau,\hat \tau^{(k)}) &= \!- \frac{\hat R_v^{(k)}}{2\sigma_v^2}
      \!-\! \frac{N}{2} \log
      \sigma_v^2 \!-\!\frac{\hat R_y^{(k)}}{2\sigma_y^2} \!-\! \frac{N}{2}\log \sigma_y^2
      \!-\! \frac{1}{2}\trace\cbr{{K_g(\rho)}^{-1} \hat P_g^{(k)}}\\
      \!&-\!\frac{1}{2} \enVert{\hat g^{(k)} \!-\!
    \mu_g(\rho)}^{2}_{{K_g(\rho)}^{-1}}
  \!-\!\frac{1}{2} \trace\cbr{{K_w(\theta)}^{-1} \hat P_w^{(k)}}
    \!-\!\frac{1}{2} \enVert{\hat w^{(k)} \!-\!
  \mu_w(\theta)}^{2}_{{K_w(\theta)}^{-1}}\\ 
  &\!-\!\frac{1}{2} \log \det K_g(\rho) \!-\!\frac{1}{2} \log \det K_w(\theta).
    \end{aligned}}
  \end{equation}
\end{thm}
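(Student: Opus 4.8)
The plan is to start from the factorization of the joint density implied by the mutual independence in~\eqref{eq:ui_model}, namely $\p(y,v,g,w;\tau) = \p(y|g,w;\sigma_y^2)\,\p(v|w;\sigma_v^2)\,\p(g;\rho)\,\p(w;\theta)$, and then write the logarithm of each Gaussian factor explicitly. Using the Toeplitz identity~\eqref{eq:toeplitz_property} to replace $Wg$ with $Gw$, the four factors contribute, respectively, $-\frac{1}{2\sigma_y^2}\enVert{y-Gw}^2 - \frac{N}{2}\log\sigma_y^2$; the term $-\frac{1}{2\sigma_v^2}\enVert{v-w}^2 - \frac{N}{2}\log\sigma_v^2$; the prior term $-\frac{1}{2}\enVert{g-\mu_g(\rho)}^2_{K_g(\rho)^{-1}} - \frac{1}{2}\log\det K_g(\rho)$; and the analogous prior term for $w$. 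The four additive constants $-\frac{N}{2}\log(2\pi)$ are independent of $\tau$ and can be discarded under the $\cong$ convention, since $Q^{\mc}$ is only ever used for maximization over $\tau$.

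Next I would substitute each particle pair $\del{\bar g^{(j,k)},\bar w^{(j,k)}}$ into this expression and average over $j=1,\dots,M_k$. The log-determinant and noise-variance terms do not depend on the particles, so their average equals the term itself. The data-fit quadratic forms $\enVert{y-\bar G^{(j,k)}\bar w^{(j,k)}}^2$ and $\enVert{v-\bar w^{(j,k)}}^2$ average directly to $\hat R_y^{(k)}$ and $\hat R_v^{(k)}$ by the definitions in~\eqref{eq:mcem_moments}, again after invoking~\eqref{eq:toeplitz_property} to identify $\bar G^{(j,k)}\bar w^{(j,k)}$ with $\bar W^{(j,k)}\bar g^{(j,k)}$.

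The crux of the argument---and the only nonroutine step---is the treatment of the two prior quadratic forms. For a symmetric matrix $A$ and a fixed vector $\mu$, I would use the bias--variance decomposition $\frac{1}{M_k}\sum_j \enVert{x_j - \mu}^2_A = \trace\cbr{A\,\hat P} + \enVert{\hat x - \mu}^2_A$, where $\hat x = \frac{1}{M_k}\sum_j x_j$ and $\hat P = \frac{1}{M_k}\sum_j \del{x_j - \hat x}\del{x_j - \hat x}^T$. This follows by writing $x_j - \mu = \del{x_j - \hat x} + \del{\hat x - \mu}$, expanding, noting that the cross term vanishes because $\sum_j \del{x_j - \hat x} = 0$, and rewriting $\frac{1}{M_k}\sum_j \del{x_j - \hat x}^T A \del{x_j - \hat x} = \trace\cbr{A\,\hat P}$ by cyclicity of the trace. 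Applying this identity with $A = K_g(\rho)^{-1}$, $\mu = \mu_g(\rho)$ and with $A = K_w(\theta)^{-1}$, $\mu = \mu_w(\theta)$ produces exactly the paired trace-and-quadratic terms $-\frac{1}{2}\trace\cbr{K_g(\rho)^{-1}\hat P_g^{(k)}} - \frac{1}{2}\enVert{\hat g^{(k)} - \mu_g(\rho)}^2_{K_g(\rho)^{-1}}$ and their $w$-counterparts appearing in~\eqref{eq:Qmc}.

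Collecting all averaged contributions then yields~\eqref{eq:Qmc}. I do not anticipate any genuine obstacle beyond keeping the bookkeeping of the two prior decompositions straight; the subtlety worth flagging is that the trace terms arise precisely because the Monte Carlo average retains the second-moment information of the particles, which is what distinguishes this sampled E-step from merely evaluating the joint log-likelihood at the posterior means $\hat g^{(k)}$ and $\hat w^{(k)}$.
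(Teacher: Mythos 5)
Your proposal is correct and follows essentially the same route as the paper's own proof: average the complete-data log-likelihood over the Gibbs particles, note that the data-fit terms average directly to $\hat R_y^{(k)}$ and $\hat R_v^{(k)}$, and apply the bias--variance decomposition (expanding around the sample mean, with the cross term vanishing and the trace arising by cyclicity) to the prior quadratic forms. The only cosmetic differences are that you make explicit the discarding of the $2\pi$ constants and the use of the Toeplitz identity~\eqref{eq:toeplitz_property}, both of which the paper leaves implicit.
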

\begin{proof}
  See Appendix~\ref{pf:Qmc}.
\end{proof}

In the M-step, we update the hyperparameters $\hat \tau^{(k)}$ by maximizing
the approximate lower bound of the marginal likelihood, $Q^{\mc}$. Because of the closed
form expression in Theorem~\ref{thm:Qmc}, the M-step splits into the decoupled
optimization problems for the kernel hyperparameters and the noise variances
according to the following:
\begin{cor}\label{cor:Qmc}
  At the $k$th iteration of the EM method, the kernel hyperparameters can be
  updated as
  \begin{equation}
    \begin{aligned}
      \hat \rho^{(k+1)} &= \arg\min_\rho
      \enVert{\hat g^{(k)}-\mu_g(\rho)}^{2}_{{K_g(\rho)}^{-1}}
      + \trace\cbr{{K_g(\rho)}^{-1} \hat P^{(k)}_g} + \log \det K_g(\rho),\\
      \hat \theta^{(k+1)} &= \arg\min_\theta
      \enVert{\hat w^{(k)}-\mu_w(\theta)}^{2}_{{K_w(\theta)}^{-1}}
      + \trace\cbr{{K_w(\theta)}^{-1} \hat P^{(k)}_w} + \log \det K_w(\theta),
    \end{aligned}
  \end{equation}
  and  the noise variances can be updated as
  \begin{equation}
    \hat \sigma_v^{2\,(k+1)} = \frac{\hat R_v^{(k)}}{N},\qquad
    \hat \sigma_y^{2\,(k+1)} = \frac{\hat R_y^{(k)}}{N}.
  \end{equation}
\end{cor}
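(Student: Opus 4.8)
The plan is to read the result directly off the closed-form expression for $Q^{\mc}$ established in Theorem~\ref{thm:Qmc}, exactly as Corollary~\ref{cor:Qpigs} was obtained from Theorem~\ref{thm:Qpigs}. The crucial observation is that every sufficient statistic appearing in~\eqref{eq:Qmc}---namely $\hat g^{(k)}$, $\hat w^{(k)}$, $\hat P_g^{(k)}$, $\hat P_w^{(k)}$, $\hat R_v^{(k)}$, and $\hat R_y^{(k)}$---is computed from the Gibbs samples drawn at iteration $k$ and hence depends only on $\hat\tau^{(k)}$. These quantities are therefore \emph{constants} with respect to the free variable $\tau=\{\rho,\theta,\sigma_y^2,\sigma_v^2\}$ over which the M-step maximises, and the whole argument reduces to an optimisation of the explicit function~\eqref{eq:Qmc} in $\tau$.

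The first step is to group the terms of~\eqref{eq:Qmc} by which component of $\tau$ they involve. The expression splits additively into four blocks that share no common variable: a $\rho$-block consisting of $-\frac{1}{2}\trace\cbr{K_g(\rho)^{-1}\hat P_g^{(k)}}$, $-\frac{1}{2}\enVert{\hat g^{(k)}-\mu_g(\rho)}^2_{K_g(\rho)^{-1}}$, and $-\frac{1}{2}\log\det K_g(\rho)$; a structurally identical $\theta$-block in the input quantities $\hat w^{(k)}$, $\hat P_w^{(k)}$, $\mu_w(\theta)$, $K_w(\theta)$; a $\sigma_v^2$-block $-\hat R_v^{(k)}/(2\sigma_v^2)-(N/2)\log\sigma_v^2$; and a $\sigma_y^2$-block of the same form in $\hat R_y^{(k)}$. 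Since no term couples two different components, maximising $Q^{\mc}$ over $\tau$ decouples into four independent problems, each of which I treat separately.

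The second step handles the two kernel blocks. Maximising the $\rho$-block is, after multiplying by $-2$ and dropping the additive constant, precisely the minimisation stated for $\hat\rho^{(k+1)}$; the same manipulation on the $\theta$-block gives the stated problem for $\hat\theta^{(k+1)}$. In general these admit no closed form (they depend on the chosen kernel parameterisation), which is why they remain as $\arg\min$ problems. The third step solves the two variance blocks in closed form: writing $s=\sigma_v^2$, the function $-\hat R_v^{(k)}/(2s)-(N/2)\log s$ is smooth on $s>0$, its derivative $\hat R_v^{(k)}/(2s^2)-N/(2s)$ vanishes only at $s=\hat R_v^{(k)}/N$, and since the objective tends to $-\infty$ both as $s\to 0^{+}$ and as $s\to\infty$ this stationary point is the global maximiser; hence $\hat\sigma_v^{2\,(k+1)}=\hat R_v^{(k)}/N$, and the identical computation yields $\hat\sigma_y^{2\,(k+1)}=\hat R_y^{(k)}/N$.

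I do not expect any serious obstacle: the entire content of the corollary is the additive decoupling recognised in the first step, and the only genuine computation is the elementary first-order condition for the scalar variance terms together with the check that the resulting stationary point is the global maximiser over the positive reals. The only point requiring a word of care is that the decoupling relies essentially on the statistics in~\eqref{eq:mcem_moments} being frozen at $\hat\tau^{(k)}$, so that the $\arg\min$ problems for $\rho$ and $\theta$ and the closed-form updates for the variances can be carried out simultaneously and independently within a single M-step.
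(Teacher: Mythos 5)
Your proposal is correct and takes essentially the same route as the paper, whose entire proof is that the corollary ``follows from direct maximization of~\eqref{eq:Qmc}''; the additive decoupling into $\rho$-, $\theta$-, $\sigma_v^2$-, and $\sigma_y^2$-blocks and the scalar first-order condition $s=\hat R^{(k)}/N$ for the variances are precisely what that direct maximization amounts to. One minor slip in your opening analogy: Corollary~\ref{cor:Qpigs} does \emph{not} decouple fully (its $\theta$-update requires a conditional-maximization step with the noise variances frozen, since $\hat R_y(\theta)$ there depends on $\theta$), whereas here $\hat R_y^{(k)}$ is a sample average independent of $\tau$, which is exactly why the four blocks separate cleanly---your argument for the present corollary is unaffected by this.
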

\begin{proof}
  Follows from direct maximization of~\eqref{eq:Qmc}.
\end{proof}

Thanks to Theorem~\ref{thm:Qmc} and Corollary~\ref{cor:Qmc}, we have a simple
way to compute the MCEM estimates of the kernel hyperparameters and of the
noise variances; starting from an initial value of the hyperparameters, we iterate
the following three steps:
\begin{enumerate}
  \item Run a Gibbs sampler according to~\eqref{eq:gibbs}.
  \item Collect the samples according to~\eqref{eq:burnin} and compute the moments according
    to~\eqref{eq:mcem_moments}.
  \item Update the parameters according to Corollary~\ref{cor:Qmc}.
\end{enumerate}
Under mild regularity conditions, these iterations yield a sequence of
parameter estimates that converges to a stationary point of the marginal likelihood of the
data (under the condition that the number of particles $M_k$ at iteration $k$ is
such that that $\sum_{k=1}^\infty M_k^{-1} = \infty$;
see~\cite{neath2013convergence}). Then, using the estimated
hyperparameters, we can run a new Gibbs sampler and approximate the integrals
in~\eqref{eq:posteriormeans} with averages over the samples:
\begin{equation}\label{eq:particle_means}
  \hat g \approx \frac{1}{M} \sum_{j=1}^M \bar g^{(j)},\quad
  \hat w \approx \frac{1}{M} \sum_{j=1}^M \bar w^{(j)}.
\end{equation}

\subsection{Variational Bayes approximation}\label{sec:vb}
The second  method we present is a variational approximation method.
Instead of approximating the unknown joint posterior density using sampling, we
propose an analytically tractable family of distributions and we look for the best
approximation of the unknown posterior density within that family.

The variational Bayes method hinges on the fact that
\begin{equation}
  \log\p(y,v,g,w;\tau) = \log(g,w|y,v;\tau) + \log\p(y,v;\tau).
\end{equation}
Hence, for any proposal distribution $\q$ in some family of
distributions $\Q$, we can write
\begin{equation}
  \log \p(y,v;\tau) = \log \frac{\p(y,v,g,w;\tau)}{\q(g,w)} - \log
  \frac{\p(g,w|y,v;\tau)}{\q(g,w)}.
\end{equation}
Taking the expectation with respect to $\q$ and observing that the left hand
side is independent of $g$ and $w$, we get that
\begin{equation}\label{eq:marginal_split}
  \log \p(y,v;\tau) = L(\q) + KL(\q),
\end{equation}
where we have defined the functional
\begin{equation}
  L(\q) = \int \log\left(\frac{\p(y,v,g,w;\tau)}{\q(g,w)}\right)\q(g,w)\dif
  g \dif w,
\end{equation}
and the \emph{Kullback-Leibler (KL) distance}~\cite{kullback1951information}
\begin{equation}
  KL(\q) = \int \log\left(\frac{\q(g,w)}{\p(g,w|y,v;\tau)}\right)\q(g,w)\dif
  g \dif w.
\end{equation}
Although the KL distance is not a metric---it is not symmetric
and it does not satisfy the triangle inequality---it is a useful measure of
similarity between probability
distributions (see~\cite[Section~1.6.1]{bishop2006pattern}).

Because the left hand side of~\eqref{eq:marginal_split} is independent of $\q$,
we can find the distribution $\q^\star$ with minimum distance
(in the KL sense) to the target distribution by maximizing
the functional $L(\q)$ with respect to $\q\in \Q$,
\begin{equation}\label{eq:kl_optimization}
  \q^\star(g,w) =\arg\min_{\q \in \Q} KL(\q) = \arg\max_{\q\in \Q} L(\q).
\end{equation}
This technique allows us to use the known functional $L(\q)$ to find the $\q$
with minimum KL distance to the unknown joint posterior distribution.

To use the variational approximation, we need to fix a family of distributions
$\Q$ among which to look for $\q^\star$. In this work, we use a
\emph{mean-field approximation}, meaning that we look for an approximation of
the posterior distribution where $g$ and $w$ are independent given the data; in
other words we consider proposal distributions that factorize into two
independent factors according to
\begin{equation}
  \q(g,w) = \q_g(g)\q_w(w).
\end{equation}

After choosing the family of proposal distributions, we need to find the best
approximation $\q^\star$ in terms of KL distance to the unknown
posterior distribution; in view of~\eqref{eq:kl_optimization}, the solution is
given by
\begin{equation}
  \q^\star(g,w) = \arg\max_{\q_g,\,\q_w} L(\q_g\q_w).
\end{equation}
Consider first the factor $\q_g$. We have that
\begin{equation}
  \begin{aligned}
    L&(\q_g\q_w) = \int
  \log\left(\frac{\p(y,v,g,w;\tau)}{\q_g(g)\q_w(w)}\right)
  \q_g(g) \q_w(w)\dif g \dif w,\\
  &\!\cong\!\int \!\sbr{\log\p(y,v,g,w;\tau)\q_w(w)\dif w - \log \q_g(g)
}\q_g(g)\dif g,
  \end{aligned}
\end{equation}
ignoring terms independent of $\q_g$.
If we define the distribution $\p_w(y,v,g;\tau)$ such that
\begin{equation}
  \log \p_w(y,v,g;\tau) = \int \log \p(y,v,g,w;\tau) \q_w(w)\dif w,
\end{equation}
we have that, again ignoring terms independent of $\q_g(g)$,
\begin{equation}
  L(\q_g\q_w) \cong - \int \log\del{\frac{\p_w(y,v,g;\tau)}{\q_g(g)}}\q_g(g)
  \dif g,
\end{equation}
which is the negative KL distance between the factor $\q_g$ and the
density $\p_w(y,v,g;\tau)$. Because the KL distance is nonnegative,
by choosing $\q_g^\star(g)=\p_w(y,v,g;\tau)$ (where the KL distance is zero)
we are maximizing the functional $L$ with respect to
$\q_g$. Considering now $\q_w(w)$, we can trace the same argument and find that
the optimal choice is
\begin{equation}\label{eq:q_w_star}
  \log \q^\star_w(w) = \int \log p(y,v,g,w;\tau)\q^\star_g(g)\dif g,
\end{equation}
where $\q^\star_g(g)$ is the solution of
\begin{equation}\label{eq:q_g_star}
  \log \q^\star_g(g) = \int \log p(y,v,g,w;\tau)\q^\star_w(w)\dif w.
\end{equation}
The maximum of $L(\q_g\q_w)$ is, therefore, the simultaneous solution
of~\eqref{eq:q_w_star} and~\eqref{eq:q_g_star}. The solution can be found with
the following iterative procedure: from an initialization
$\q^{(0)}_g$ and $\q_{w}^{(0)}$ of the densities, compute
\begin{equation}\label{eq:variational_bayes}
  \begin{aligned}
    \!\log \q^{(j+1)}_w(w) &= \int \log p(y,v,g,w;\tau)\q^{(j)}_g(g)\dif g,\\
    \!\log \q^{(j+1)}_g(g) &= \int \log p(y,v,g,w;\tau)\q^{(j+1)}_w(w)\dif w.
  \end{aligned}
\end{equation}
This iterative procedure will converge to the
simultaneous solution of~\eqref{eq:q_g_star} and~\eqref{eq:q_w_star}
(see~\cite[Chapter~10]{bishop2006pattern}; see also~\cite{boyd2004convex}).

As was the case for the Gibbs sampler, which can be used only if it easy to
sample from the full conditional distributions, the variational approximation of
the joint posterior is only useful if it is possible to compute the expectations
in~\eqref{eq:q_w_star} and~\eqref{eq:q_g_star}. In the uncertain-input case, we
have the following result.
\begin{thm}\label{thm:vb_gaussian}
  Let $\q_g^\star\q^\star_w$ be the factorized density with minimum KL distance
  to posterior density $\p(g,w|y,v;\tau)$, for a fixed value of the
  hyperparameters. Then, $\q^\star_g$ and $\q^\star_w$ are Gaussian distributions.
\end{thm}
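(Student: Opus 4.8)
The plan is to show that each optimal factor has a log-density that is a quadratic form in its argument, which is exactly the characterization of a Gaussian. First I would write the log of the complete-data density, using the mutual independence in~\eqref{eq:ui_model}, as the sum
\[
\log \p(y,v,g,w;\tau) \cong -\tfrac{1}{2\sigma_y^2}\|y-Wg\|^2 -\tfrac{1}{2\sigma_v^2}\|v-w\|^2 -\tfrac{1}{2}\|g-\mu_g(\rho)\|^2_{K_g(\rho)^{-1}} -\tfrac{1}{2}\|w-\mu_w(\theta)\|^2_{K_w(\theta)^{-1}},
\]
discarding terms independent of $g$ and $w$. The essential tool is the Toeplitz identity~\eqref{eq:toeplitz_property}, $Wg=Gw$, which lets me expose the output-likelihood coupling as either quadratic in $g$ (with $W=\T_{N\times N}(w)$) or quadratic in $w$ (with $G=\T_{N\times N}(g)$), as the two updates require.

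For the factor $\q_g^\star$, I would substitute this expression into the update~\eqref{eq:q_g_star}. Collecting the $g$-dependent terms and expanding $\|y-Wg\|^2 = y^Ty -2y^TWg + g^TW^TWg$, integration against $\q_w^\star$ replaces $W$ and $W^TW$ by their expectations $\E_{\q_w^\star}[W]$ and $\E_{\q_w^\star}[W^TW]$. Because $W=\T_{N\times N}(w)$ is linear in $w$, these are finite and determined by the first two moments of $\q_w^\star$; adding the prior term $-\tfrac12\|g-\mu_g(\rho)\|^2_{K_g(\rho)^{-1}}$ gives
\[
\log \q_g^\star(g) \cong -\tfrac12\, g^T\Big(\tfrac{1}{\sigma_y^2}\E_{\q_w^\star}[W^TW]+K_g(\rho)^{-1}\Big)g + \ell_g^T g,
\]
where $\ell_g$ collects the linear terms. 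This is a quadratic-plus-linear form in $g$, so $\q_g^\star$ is Gaussian, with precision $\tfrac{1}{\sigma_y^2}\E_{\q_w^\star}[W^TW]+K_g(\rho)^{-1}$, which is positive definite because the prior precision $K_g(\rho)^{-1}$ is.

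The factor $\q_w^\star$ is handled symmetrically, now invoking $Wg=Gw$ so that the output term reads $-\tfrac{1}{2\sigma_y^2}\|y-Gw\|^2$; together with $-\tfrac{1}{2\sigma_v^2}\|v-w\|^2$ and the prior on $w$ this is quadratic in $w$, and integrating~\eqref{eq:q_w_star} against $\q_g^\star$ replaces $G$ and $G^TG$ by $\E_{\q_g^\star}[G]$ and $\E_{\q_g^\star}[G^TG]$, again yielding a quadratic log-density, so $\q_w^\star$ is Gaussian. The step requiring care is the cross term $g^TW^TWg$ (and its dual $w^TG^TGw$): taking the expectation produces the full second moment $\E[W^TW]$ rather than merely $(\E W)^T(\E W)$, so the point to get right is that each update stays \emph{exactly} quadratic and that the two coupled fixed-point equations~\eqref{eq:q_g_star}–\eqref{eq:q_w_star} are mutually consistent with Gaussian factors having finite second moments. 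Everything else is routine completion of the square.
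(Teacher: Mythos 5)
Your proposal is correct and follows essentially the same route as the paper's proof: substitute the complete-data log-likelihood into the coupled stationarity equations \eqref{eq:q_g_star}--\eqref{eq:q_w_star}, use $Wg=Gw$ to expose the output term as quadratic in the relevant variable, take expectations against the other factor (which replaces $W^TW$, $G^TG$ by their second moments), and conclude Gaussianity from the resulting quadratic log-density. Your added remarks on the positive definiteness of the precision and on the finiteness of second moments are slightly more careful than the paper's argument, but do not constitute a different approach.
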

\begin{proof}
  See Appendix~\ref{pf:vb_gaussian}.
\end{proof}

Theorem~\ref{thm:vb_gaussian} allows us to compute expectations with respect to
$\q^\star_g$ and $\q^\star_w$ easily. In addition, at every iteration
of~\eqref{eq:variational_bayes} the approximating densities remain Gaussian.
This allows us to write the update~\eqref{eq:variational_bayes} in terms
of the first and second moments of the approximating densities:
\begin{cor}\label{cor:iterative_vb}
  Let $ w^{(j)}$ and $ g^{(j)}$ be the mean vectors of $\q_w^{(j)}$
  and $\q_g^{(j)}$ at the $j$th iteration of~\eqref{eq:variational_bayes}
  and let $ P_w^{(j)}$ and $ P_g^{(j)}$ be the covariance matrices. Let
  $ g^{(j+1)}$, $ w^{(j+1)}$, $ P^{(j+1)}_g$, and
  $ P_w^{(j+1)}$ be the mean vectors and covariance matrices at the
  $(j+1)$th iteration. Let
  \begin{equation}
    \begin{aligned}
      T_g^{(j)}&= R\del{I_N \otimes \sbr{ P_g^{(j)}+  g^{(j)}
       g^{(j)\,^T}}}R^T,\\
      T_w^{(j+1)}&= R\del{I_N \otimes \sbr{ P_w^{(j+1)}+  w^{(j+1)}
           w^{(j+1)\,T}}}R^T.
    \end{aligned}
  \end{equation}
  where the matrix $R$ is defined in~\eqref{eq:R}. Then,
  \begin{equation}\label{eq:vb_moments}
    \mathmakebox[0.88\columnwidth][l]{
  \begin{aligned}
   \!P_w^{(j+1)}&\!=\!\del[3]{\frac{1}{\sigma_y^2}T_g^{(j)} + \frac{1}{\sigma^2_v}I_n +
    {K_w(\theta)}^{-1}}^{-1},\\
   \!w^{(j+1)} &\!=\! P_w^{(j+1)}\!\del[3]{\frac{ G^{(j)\,T}}{\sigma_y^2}y + \frac{1}{\sigma_v^2}v +
  {K_w(\theta)}^{-1}\mu_w(\theta)}\!,\\
     \!P_g^{(j+1)} &= \del[3]{\frac{1}{\sigma_y^2}T_w^{(j+1)} +
      {K_g(\rho)}^{-1}}^{-1},\\
     \!g^{(j+1)} &=  P_g^{(j+1)}\del[3]{\frac{ W^{(j+1)\,T}}{\sigma_y^2}y +
  {K_g(\rho)}^{-1}\mu_g(\rho)}.
  \end{aligned}}
\end{equation}
\end{cor}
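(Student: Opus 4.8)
The plan is to evaluate the two integrals in~\eqref{eq:variational_bayes} explicitly, using that by Theorem~\ref{thm:vb_gaussian} each iterate is Gaussian, so it suffices to track the first two moments. First I would write the log joint density from the model~\eqref{eq:ui_model},
\begin{equation}
\begin{aligned}
  \log\p(y,v,g,w;\tau) \cong{}& -\frac{1}{2\sigma_y^2}\enVert{y-Wg}^2 - \frac{1}{2\sigma_v^2}\enVert{v-w}^2 \\
  &- \frac12\enVert{g-\mu_g(\rho)}^2_{{K_g(\rho)}^{-1}} - \frac12\enVert{w-\mu_w(\theta)}^2_{{K_w(\theta)}^{-1}},
\end{aligned}
\end{equation}
discarding additive constants. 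The only term coupling $g$ and $w$ is the output misfit, and the key structural observation is the Toeplitz identity~\eqref{eq:toeplitz_property}: in the $w$-update I rewrite $\enVert{y-Wg}^2=\enVert{y-Gw}^2$ so the misfit is quadratic in $w$ with coefficients depending on $g$, whereas in the $g$-update I keep $\enVert{y-Wg}^2$, which is quadratic in $g$ with coefficients depending on $w$.

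For the $w$-update~\eqref{eq:variational_bayes} I integrate the log joint against $\q_g^{(j)}$. Expanding $\enVert{y-Gw}^2 = y^Ty - 2y^T Gw + w^T G^T G\,w$ and taking the expectation over $g$, the linear term needs $\E_{\q_g^{(j)}}[G]=G^{(j)}$, where $G^{(j)}:=\T_{N\times N}(g^{(j)})$, which is immediate because $G=\T_{N\times N}(g)$ is linear in $g$; the quadratic term needs $\E_{\q_g^{(j)}}[G^T G]$. Completing the square exactly as in the proof of Lemma~\ref{lem:cond_w} then yields a Gaussian in $w$ whose inverse covariance is $\sigma_y^{-2}\E_{\q_g^{(j)}}[G^T G] + \sigma_v^{-2}I_N + {K_w(\theta)}^{-1}$ and whose mean solves the corresponding linear system, giving the first two lines of~\eqref{eq:vb_moments}.

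The main obstacle, and the only nonroutine step, is to identify $\E_{\q_g^{(j)}}[G^T G]$ with the matrix $T_g^{(j)}$ built from $R$ in~\eqref{eq:R}. I would prove this by a direct entrywise computation: writing $S=P_g^{(j)}+g^{(j)}g^{(j)\,T}$ for the second-moment matrix of $g$ and using $[G]_{i,m}=g_{i-m+1}$, the $(m,n)$ entry of $G^T G$ is $\sum_i g_{i-m+1}g_{i-n+1}$, so $[\E_{\q_g^{(j)}} G^T G]_{m,n}=\sum_i S_{i-m+1,\,i-n+1}$, summed over indices in range. On the other side, since $[\Delta^k]_{i,j}=1$ exactly when $i+j-1=k$, the block-diagonal structure of $I_N\otimes S$ gives $R(I_N\otimes S)R^T=\sum_{k=1}^N \Delta^k S (\Delta^k)^T$, and a short index chase shows $[\Delta^k S(\Delta^k)^T]_{m,n}=S_{k-m+1,\,k-n+1}$. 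Summing over $k$ matches the two expressions, establishing $\E_{\q_g^{(j)}}[G^T G]=T_g^{(j)}$. The same identity applied to $W=\T_{N\times N}(w)$ gives $\E_{\q_w^{(j+1)}}[W^T W]=T_w^{(j+1)}$ and $\E_{\q_w^{(j+1)}}[W]=W^{(j+1)}:=\T_{N\times N}(w^{(j+1)})$.

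Finally, for the $g$-update I integrate the log joint against $\q_w^{(j+1)}$; the $v$-misfit and the $w$-prior drop out since they are independent of $g$, and expanding $\enVert{y-Wg}^2$ and taking expectations over $w$ replaces $W$ by $W^{(j+1)}$ in the linear term and $W^T W$ by $T_w^{(j+1)}$ in the quadratic term. Completing the square once more produces the Gaussian with inverse covariance $\sigma_y^{-2}T_w^{(j+1)}+{K_g(\rho)}^{-1}$ and the stated mean, which are the last two lines of~\eqref{eq:vb_moments}. This closes the argument.
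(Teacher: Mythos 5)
Your proposal is correct and follows essentially the same route as the paper: reduce to the Gaussian complete-square computations behind Theorem~\ref{thm:vb_gaussian}, then identify $\E\{G^TG\}$, $\E\{G^T\}$, $\E\{W^TW\}$, $\E\{W^T\}$ with $T_g^{(j)}$, $G^{(j)T}$, $T_w^{(j+1)}$, $W^{(j+1)T}$ and plug them into the resulting inverse-covariance and mean expressions. The only cosmetic difference is that you verify the key identity $\E\{G^TG\}=T_g^{(j)}$ by an entrywise index chase over the antidiagonal matrices $\Delta^k$, whereas the paper obtains it in one line from the vectorization identity $G^T=R(I_N\otimes g)$, which gives $\E\{G^TG\}=R\del{I_N\otimes\E\{gg^T\}}R^T$ by linearity.
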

\begin{proof}
  See Appendix~\ref{pf:iterative_vb}.
\end{proof}

Thanks to Corollary~\ref{cor:iterative_vb}, we can iteratively update the
moments of the Gaussian factors, and the iterations will converge to the moments
of optimal variational approximation of the joint posterior distribution.

\begin{rem}
  In case we consider the more general Gaussian process
  model~\eqref{eq:gp_joint}, the results of
  Theorem~\ref{thm:vb_gaussian}  and of Corollary~\ref{cor:iterative_vb} still
  hold with minor modifications (similarly to what is presented in
  Remark~\ref{rem:general_mc}). However, the approximation of posterior
  independence may not make sense when using a-priori dependent Gaussian
  process models.
\end{rem}

Using the factorized approximation of the joint distribution,
we can approximate the E-step in the EM method with a variational
E-step (this is sometimes known as the VBEM method,
see~\cite{beal2003variational}). We
create the variational approximation of the lower bound (at the $k$th iteration
of the EM method) by setting
\begin{equation}
  Q^{\vb}(\tau,\hat\tau^{(k)}) \!:=\! \int\!
  \log\p(y,v,g,w;\tau)\hat \q_g^{(k)}(g)\hat \q_w^{(k)}(w)\dif w\dif g,
\end{equation}
where $\hat \q_g^{(k)}$ and $\hat\q_w^{(k)}$ are the limits of the variational
Bayes iterations with the hyperparameters set to $\hat \tau^{(k)}$.

Because the complete-data likelihood is quadratic in $g$ and $w$, the
approximation $Q^\vb$ admits the closed form expression
in function of the moments of $g$ and $w$.
\begin{thm}\label{thm:Qvb}
  Let $\hat g^{(k)}$ and $\hat w^{(k)}$ be the mean vectors of $\hat\q_g^{(k)}$
  and of $\hat \q_w^{(k)}$, respectively, and let $\hat P^{(k)}$ and
  $\hat P^{(k)}$  be their covariance matrices. Define
  \begin{equation}
    \begin{aligned}
      \hat S_w^{(k)} &= R \del[1]{I_n \otimes \hat P_g^{(k)}} R^T,
       &\hat T_w^{(k)} &= \hat S_w^{(k)} \!+\! \hat W^{(k)T}\hat W^{(k)},\\
      \hat R_v^{(k)}  &= \enVert[1]{v - \hat w^{(k)}}^2, 
      &\hat R_y^{(k)} &= \enVert[1]{y - \hat W^{(k)} \hat g^{(k)}}^2,
    \end{aligned}
  \end{equation}
  where $R$ is defined in~\eqref{eq:R}.
  Then,
  \begin{equation}\label{eq:Qvb}
    \begin{aligned}
      &Q^\vb(\tau,\hat \tau^{(k)}) = -\frac{\hat R_v^{(k)}}{2\sigma_v^2}
      -\frac{N}{2}\log \sigma_v^2 - \frac{N}{2}\log \sigma_y^2- \frac{1}{2\sigma_y^2}\del{\hat R_y^{(k)} -
    \enVert[1]{\hat g^{(k)}}^2_{\hat S_w^{(k)}} - \trace\cbr{\hat T_w^{(k)} P_g^{(k)}}}\\
      &- \frac{1}{2}\trace\cbr{{K_g(\rho)}^{-1} \hat P_g^{(k)}} -\frac{1}{2} \enVert{\hat g^{(k)} -
    \mu_g(\rho)}^{2}_{{K_g(\rho)}^{-1}}-\frac{1}{2}
    \trace\cbr{{K_w(\theta)}^{-1} \hat P_w^{(k)}}\\
    &-\frac{1}{2} \enVert{\hat w^{(k)} -
  \mu_w(\theta)}^{2}_{{K_w(\theta)}^{-1}}-\frac{1}{2} \log \det K_g(\rho)
  -\frac{1}{2} \log \det K_w(\theta).
    \end{aligned}
  \end{equation}
\end{thm}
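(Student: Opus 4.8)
The plan is to evaluate the defining expectation
\[
  Q^{\vb}(\tau,\hat\tau^{(k)}) = \E_{\hat\q_g^{(k)}\hat\q_w^{(k)}}\bigl(\log\p(y,v,g,w;\tau)\bigr)
\]
directly, exploiting that, by Theorem~\ref{thm:vb_gaussian}, the averaging density factorises into two independent Gaussians, $g\sim\N(\hat g^{(k)},\hat P_g^{(k)})$ and $w\sim\N(\hat w^{(k)},\hat P_w^{(k)})$. First I would factor the complete-data log-likelihood along the conditional-independence structure of~\eqref{eq:ui_model},
\[
  \log\p(y,v,g,w;\tau) = \log\p(y|g,w;\sigma_y^2) + \log\p(v|w;\sigma_v^2) + \log\p(g;\rho) + \log\p(w;\theta),
\]
and replace each factor by its explicit Gaussian log-density: a quadratic form, a log-determinant, and a $\tau$-independent constant that I discard under $\cong$. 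This reduces the task to four Gaussian expectations of quadratic forms.

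Three of these are routine, and I would dispatch them with the single identity $\E\,\|x-a\|_M^2 = \|\E x - a\|_M^2 + \trace(M\,\cov(x))$, valid for any Gaussian $x$. Applied to $\log\p(v|w;\sigma_v^2)$, $\log\p(g;\rho)$ and $\log\p(w;\theta)$ in turn, it returns, for each term, the quadratic form evaluated at the posterior mean plus a trace correction against the posterior covariance, together with the corresponding log-determinant. These supply every contribution to~\eqref{eq:Qvb} apart from the one coming from the output equation $y = Wg + \varepsilon$.

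The crux is the remaining contribution $\E\,\|y - Wg\|^2$, in which \emph{both} factors are random: $W = \T_{N\times N}(w)$ is the Toeplitz matrix of the random vector $w$, while $g$ is an independent random vector. My plan is to split $W = \hat W^{(k)} + \tilde W$ and $g = \hat g^{(k)} + \tilde g$ into mean and centred parts and expand the square; every cross term is linear in exactly one of the two centred, independent, zero-mean factors, so it vanishes in expectation, leaving $\hat R_y^{(k)} = \|y - \hat W^{(k)}\hat g^{(k)}\|^2$ and the three variance terms $\E\,\|\hat W^{(k)}\tilde g\|^2$, $\E\,\|\tilde W\hat g^{(k)}\|^2$ and $\E\,\|\tilde W\tilde g\|^2$. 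The first equals $\trace((\hat W^{(k)})^{T}\hat W^{(k)}\hat P_g^{(k)})$ at once; the other two involve $\tilde W = \T_{N\times N}(w-\hat w^{(k)})$, for which I would invoke the key matrix identity
\[
  \E\,\T_{N\times N}(\xi)^{T}\,\T_{N\times N}(\xi) = R\,(I_N\otimes\Sigma)\,R^{T},\qquad \xi\sim\N(0,\Sigma),
\]
with $R$ the selection matrix of~\eqref{eq:R}, applied with $\Sigma = \hat P_w^{(k)}$. Establishing this identity is the step I expect to be the main obstacle: it is proved entrywise, by matching $\sum_a \Sigma_{a-i+1,\,a-j+1}$, the $(i,j)$ entry of the left-hand side, against $\sum_k \Delta^k\Sigma(\Delta^k)^{T}$, and it is precisely the mechanism that generated $\hat S_g^{(k)}$ in Theorem~\ref{thm:Qpigs}. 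With it, the two surviving variance terms collapse into a quadratic form in $\hat g^{(k)}$ and a trace against $\hat P_g^{(k)}$, which assemble into $\|\hat g^{(k)}\|_{\hat S_w^{(k)}}^2$ and $\trace(\hat T_w^{(k)}\hat P_g^{(k)})$; gathering all contributions and dropping the $\tau$-independent constants then yields~\eqref{eq:Qvb}. The one bookkeeping subtlety is that the randomness in $\|y-Wg\|^2$ may be routed either through $W$ or through $G = \T_{N\times N}(g)$, the two being interchangeable by~\eqref{eq:toeplitz_property}; I must fix a single convention and track carefully which of $\hat P_g^{(k)}$ and $\hat P_w^{(k)}$ enters $\hat S_w^{(k)}$ so that the covariances land in their intended places.
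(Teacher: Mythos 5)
Your proposal is correct and follows essentially the same route as the paper's proof: expectation of the complete-data log-likelihood under the two independent Gaussian factors, the mean-plus-trace identity for the three simple quadratic forms, and the Toeplitz/Kronecker identity behind $\E\cbr{W^TW} = R\del{I_N\otimes\del{\hat P_w^{(k)}+\hat w^{(k)}\hat w^{(k)T}}}R^T = \hat T_w^{(k)}$ for the convolution term---the paper works with uncentred second moments where you centre $W$ and $g$ first, which is algebraically equivalent bookkeeping. Your resolution of the ``bookkeeping subtlety'' also matches the paper's own proof: $\hat S_w^{(k)}$ is built from $\hat P_w^{(k)}$ (the $\hat P_g^{(k)}$ appearing in the theorem's displayed definition of $\hat S_w^{(k)}$ is a typo), and, like the paper's proof, you obtain plus signs in front of $\enVert{\hat g^{(k)}}^2_{\hat S_w^{(k)}}$ and $\trace\cbr{\hat T_w^{(k)}\hat P_g^{(k)}}$, consistent with Corollary~\ref{cor:Qvb} and indicating that the minus signs printed inside the $\sigma_y^{-2}$ bracket of~\eqref{eq:Qvb} are sign typos.
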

\begin{proof}
  See Appendix~\ref{pf:Qvb}.
\end{proof}

Thanks to the structure of the function $Q^\vb(\tau,\hat \tau^{(k)})$, 
the M-step splits into
decoupled optimization problems for the kernel hyperparameters and for the
noise variances.
\begin{cor}\label{cor:Qvb}
  At the $k$th iteration of the EM method, the kernel hyperparameters can be
  updated as
\begin{equation}\label{eq:vbem_update_rho}
    \begin{aligned}
      \hat \rho^{(k+1)} &= \arg\min_\rho
      \enVert{\hat g^{(k)}-\mu_g(\rho)}^{2}_{{K_g(\rho)}^{-1}}+
      \trace\cbr{{K_g(\rho)}^{-1} \hat P^{(k)}_g} + \log \det K_g(\rho),\\
      \hat \theta^{(k+1)} &= \arg\min_\theta
      \enVert{\hat w^{(k)}-\mu_w(\theta)}^{2}_{{K_w(\theta)}^{-1}}+ \trace\cbr{{K_w(\theta)}^{-1} \hat P^{(k)}_w} + \log \det K_w(\theta),
    \end{aligned}
\end{equation}
and the noise variances can be updated as
\begin{equation}
  \begin{aligned}
    \hat \sigma_v^{(k+1)} &= \frac{\hat R^{(k)}_v}{N},\\
    \hat \sigma_y^{(k+1)} &=
  \frac{\hat R_y^{(k)}\!+\!
    \enVert[1]{\hat g^{(k)}}^2_{\hat S_w^{(k)}} \!+\! \trace\cbr{\hat T_w^{(k)}
  P_g^{(k)}}}{N}.
  \end{aligned}
\end{equation}
\end{cor}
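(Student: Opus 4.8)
The plan is to read off the four updates from the separated structure of the closed form for $Q^{\vb}(\tau,\hat\tau^{(k)})$ established in Theorem~\ref{thm:Qvb}. First I would partition the right-hand side of~\eqref{eq:Qvb} according to which element of $\tau=\{\rho,\theta,\sigma_y^2,\sigma_v^2\}$ each term depends on. Under the independent-prior assumption~\eqref{eq:gp_models}, the pair $(K_g,\mu_g)$ depends on $\rho$ only and $(K_w,\mu_w)$ on $\theta$ only, so the terms $-\tfrac12\trace\cbr{{K_g(\rho)}^{-1}\hat P_g^{(k)}}-\tfrac12\enVert{\hat g^{(k)}-\mu_g(\rho)}^2_{{K_g(\rho)}^{-1}}-\tfrac12\log\det K_g(\rho)$ involve $\rho$ alone, the analogous three $w$-terms involve $\theta$ alone, and the noise variances enter only through $-\tfrac{\hat R_v^{(k)}}{2\sigma_v^2}-\tfrac N2\log\sigma_v^2$ and $-\tfrac N2\log\sigma_y^2-\tfrac1{2\sigma_y^2}\del{\hat R_y^{(k)}-\enVert{\hat g^{(k)}}^2_{\hat S_w^{(k)}}-\trace\cbr{\hat T_w^{(k)}P_g^{(k)}}}$. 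Since the four groups share no common variable, the maximization of $Q^{\vb}$ decouples into four independent problems, which is precisely the claimed structure.

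For the kernel hyperparameters I would negate the relevant group and turn each maximization into the minimization stated in~\eqref{eq:vbem_update_rho}; these reproduce the $\hat\rho^{(k+1)}$ and $\hat\theta^{(k+1)}$ updates verbatim. No further reduction is available in general, because the maps $\rho\mapsto K_g(\rho)$ and $\theta\mapsto K_w(\theta)$ are arbitrary, so these two subproblems remain as (generally nonconvex) numerical optimizations.

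The variance updates follow from an elementary univariate calculation. Each variance enters only through a term of the form $-\tfrac{a}{2\sigma^2}-\tfrac N2\log\sigma^2$, with $a=\hat R_v^{(k)}$ for $\sigma_v^2$ and $a$ equal to the bracketed residual for $\sigma_y^2$. Differentiating with respect to $\sigma^2$ and setting $\tfrac{a}{2\sigma^4}-\tfrac{N}{2\sigma^2}=0$ gives the unique stationary point $\sigma^2=a/N$; since the objective diverges to $-\infty$ as $\sigma^2\to0^+$ and as $\sigma^2\to\infty$ whenever $a>0$, this stationary point is the global maximizer, which yields the two closed-form variance updates.

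There is no serious obstacle here: once Theorem~\ref{thm:Qvb} is in hand, the corollary is bookkeeping---separating variables plus one scalar optimization. The only point that warrants care is keeping the $\sigma_y^2$ numerator consistent with~\eqref{eq:Qvb}: the quantity playing the role of $a$ in the first-order condition must match exactly the residual $\hat R_y^{(k)}-\enVert{\hat g^{(k)}}^2_{\hat S_w^{(k)}}-\trace\cbr{\hat T_w^{(k)}P_g^{(k)}}$ appearing under $1/(2\sigma_y^2)$ in the $Q^{\vb}$ expression, so that the stated update is faithful to the moment definitions of Theorem~\ref{thm:Qvb}.
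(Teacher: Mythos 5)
Your overall strategy is exactly the paper's: the paper proves Corollary~\ref{cor:Qvb} by ``direct maximization of~\eqref{eq:Qvb}'', i.e.\ by observing that the terms of $Q^{\vb}$ decouple into four groups (one per element of $\tau$) and solving a scalar problem for each noise variance. Your decoupling argument and the scalar step $-\frac{a}{2\sigma^2}-\frac{N}{2}\log\sigma^2 \Rightarrow \sigma^2 = a/N$ (with the boundary check) are both sound.

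There is, however, a genuine gap in how you handle the $\sigma_y^2$ update, precisely at the point you yourself flag as ``the only point that warrants care''. If you take $a$ to be the bracketed residual exactly as printed in~\eqref{eq:Qvb}, namely $\hat R_y^{(k)} - \enVert[1]{\hat g^{(k)}}^2_{\hat S_w^{(k)}} - \trace\cbr{\hat T_w^{(k)}\hat P_g^{(k)}}$, then your recipe yields $\hat\sigma_y^{(k+1)} = \del[1]{\hat R_y^{(k)} - \enVert[1]{\hat g^{(k)}}^2_{\hat S_w^{(k)}} - \trace\cbr{\hat T_w^{(k)}\hat P_g^{(k)}}}/N$, which is \emph{not} the statement you are proving: the corollary has plus signs. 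Asserting that the minus-sign residual ``matches'' the plus-sign update papers over a contradiction rather than resolving it. The resolution requires going back to the derivation of Theorem~\ref{thm:Qvb} in Appendix~\ref{pf:Qvb}, where one computes $\E\cbr{\enVert{y-Wg}^2} = \hat R_y^{(k)} + \enVert[1]{\hat g^{(k)}}^2_{\hat S_w^{(k)}} + \trace\cbr{\hat T_w^{(k)}\hat P_g^{(k)}}$ with plus signs (the posterior covariances of $g$ and $w$ add nonnegative contributions to the expected squared residual). Hence the minus signs printed in~\eqref{eq:Qvb} are a sign typo, the $\sigma_y^2$-dependent part of $Q^{\vb}$ is $-\frac{N}{2}\log\sigma_y^2 - \frac{1}{2\sigma_y^2}\del[1]{\hat R_y^{(k)} + \enVert[1]{\hat g^{(k)}}^2_{\hat S_w^{(k)}} + \trace\cbr{\hat T_w^{(k)}\hat P_g^{(k)}}}$, and only then does your scalar argument produce the stated update. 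A quick sanity check confirms this is the right reconciliation: the plus-sign quantity is the expectation of a squared norm and is therefore nonnegative, whereas the minus-sign quantity can be negative and thus cannot be a variance estimate. So your approach is the correct one, but as written it derives a formula that contradicts the corollary; the missing step is this reconciliation of~\eqref{eq:Qvb} with the appendix computation.
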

\begin{proof}
  Follows from direct maximization of~\eqref{eq:Qvb}.
\end{proof}

Thanks to Theorem~\ref{thm:Qvb} and Corollary~\ref{cor:Qvb}, we have a simple
iterative proceduce to compute the VBEM estimates of the kernel hyperparameters
and of the noise variances; starting from an inital value of the
hyperparameters, we iterate the following two steps:
\begin{enumerate}
  \item Compute the moments of the variational approximation according to
    Corollary~\ref{cor:iterative_vb}.
  \item Update the hyperparameters according to Corollary~\ref{cor:Qvb}.
\end{enumerate}

Under mild regularity conditions, these iterations yield a sequence of
parameter estimates that converges to a stationary point of the marginal
likelihood of the data (see~\cite[Section~2.2]{beal2003variational}).
Then, we can run the iterations in Corollary~\ref{cor:iterative_vb} again to find the
posterior mean estimates of $g$ and $w$.

\section{Simulations}
In this section, we evaluate the methods proposed on some problems that can be
cast as problems of identifying uncertain-input systems.

\subsection{Cascaded linear systems}
In this numerical experiment, we estimate cascaded systems with the structure
presented in Section~\ref{sec:cascaded_systems}. We perform a Monte Carlo
experiment consisting of 500 runs. In each run, we generate two systems by
randomly sampling 40 poles and 40 zeros, in complex conjugate pairs, using the
following technique. We sample the poles randomly, with magnitudes uniformly between 0.4
and 0.8 and phases uniformly between 0 and $\pi$. We sample the zeros randomly,
with magnitudes uniformly between 0 and 0.92 and phases uniformly between 0 and
$\pi$. All systems are generated with unitary static gain. The noise variances on the
input and output measurements are $1$, respectively $1/100$, times the variance
of the corresponding noiseless signals; this means that the
sensor at the output of $S_2$ is considerably more accurate than the sensor at
the output of $S_1$.

We simulate the responses of the systems with a Gaussian
white-noise input with variance 1. We collect $N=200$ samples of the output, from zero
initial conditions, and we estimate the samples of the impulse responses of the
two systems.

As described in Section~\ref{sec:cascaded_systems}, the systems are modeled as
zero-mean Gaussian processes with first order stable-spline kernels.
All the methods are initialized with the choices $\rho_1=\theta_1=1$ and
$\rho_2=\theta_2 = 0.6$. The noise variances are initialized from the
sample variances of the errors of the linear least squares estimates of $g_1$
and $g_2$ from the noisy data.

In the experiment, we compare the following estimators.
\begin{description}
  \item[C-MCEM] The method described in Section~\ref{sec:mcmc}. It uses an
    MCMC approximation of the joint posterior with $B=400$ and $M=2000$. The
    EM iterations are stopped once the relative change in the parameter values
  is below $10^{-2}$.
  \item[C-VBEM] The method described in Section~\ref{sec:vb}. It uses a
    variational approximation of the joint posterior. The
    EM iterations are stopped once the relative change in the parameter values
  is below $10^{-2}$.
\item[C-2Stage] A kernel-based two-stage method. First, it estimates the
  first system in the cascade from $u$ and $v$. Then, it simulates the
  intermediate signal $\hat w$ as the response of the estimated system to  $u$
  and uses $\hat w$ and $y$ to estimate the second system in the cascade.
\item[C-Naive] A naive kernel-based estimation method. It estimates the
  first system in the cascade from $u$ and $v_t$ and the second system from
  $v_t$ and $y_t$. It corresponds to using the noisy signal $v_t$ as if it were
  the noiseless input to the second system in the cascade.
\end{description}

To evaluate the performance of the estimators, we use the following goodness-of-fit metric
\begin{equation}\label{eq:goodnessoffit}
  \mathrm{Fit}^g_j = 1- \frac{\enVert{g_j - \hat g_j}}{\enVert{g_j -
  \mathrm{mean}(g_j)}}
\end{equation}
where $g_j$ is the impulse response of the system at the $j$th Monte Carlo run,
and  $\hat g_j$ is an estimate of the same impulse response.

The results of the experiment are presented in
Figure~\ref{fig:boxplot_cascade}. The figure shows the boxplots of the fit of
the estimated impulse responses of the two blocks in the cascade over the
systems in the dataset.

\begin{figure}[htb]
  \centering
\includegraphics{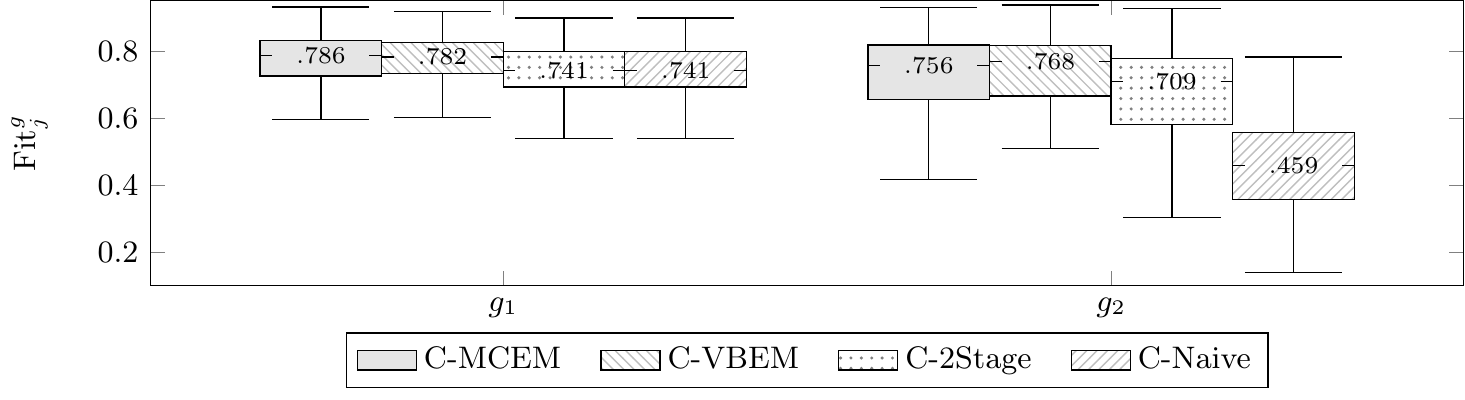}
  \caption{Results of the estimation of cascaded linear systems.}\label{fig:boxplot_cascade}
\end{figure}

From the figure, it appears that the proposed approximation methods are able to
reconstruct the cascaded model with higher accuracy than the alternative
approaches we have considered. Furthermore, there seems to be no clear
disadvantage in using the variational Bayes approximation as compared to the,
more correct, sampling-based approximation. Regarding the performance of the
methods in estimating $g_1$, we see that the methods C-MCEM and C-VBEM perform
better than the other methods (which give the same result). Both C-2Stage and
C-Naive only use the information in $v$ to estimate $g_1$, whereas C-MCEM and
C-VBEM use the full joint distribution of $v$ and $y$ to estimate $g_1$. Given
that in our setting the noise on $y$ is much lower than the noise on $v$,
there is information in $y$ that the joint methods are able to
leverage to improve the estimate of $g_1$ (similar phenomena were already
observed in~\cite{hjalmarsson2009system}, and in~\cite{everitt2013geometric}).
This allows C-MCEM and C-VBEM to better estimate $g_1$.

\subsection{Hammerstein systems}
In this numerical experiment, we estimate Hammerstein systems with the
structure presented in Section~\ref{sec:hammersteins}. We perform four Monte
Carlo experiments consisting of 500 runs. In each run, we generate a stable
transfer-function model by sampling poles and zeros in the complex plane. We
sample the poles, uniformly in magnitude and phase, in the annulus of radii 0.4
and 0.8. We sample the zeros uniformly in the disk of radius 0.92. We generate
the nonlinear transformation as a finite combination of Legendre polynomials
defined as
\begin{equation}
  \varphi_j(x) = 2^j\cdot \sum_{k=0}^j
  x^k\binom{j}{x}\binom{\tfrac{j+k-1}{2}}{j}.
\end{equation}
We sample the coefficients of the combination independently and uniformly in
the interval $\intcc{-1,1}$.

In each Monte Carlo experiment, we consider Hammerstein systems with
different orders for both the nonlinear system and the polynomial nonlinearity.
In Table~\ref{tab:hammerstein_orders}, we present the orders of the systems
considered in the various experiments.

\begin{table}[ht]
  \centering
  \caption{Orders of the Hammerstein systems used in the
  simulations.}\label{tab:hammerstein_orders}
  \begin{tabular}{ccc}
    \toprule
    Dataset & $S$ & $f(\cdot)$ \\
    \midrule
    LOLO (Low-Low) & $\{3,\ldots, 5\}$ & $\{5,\ldots, 10\}$\\
    HILO (High-Low) & $\{9,\ldots, 20\}$ & $\{5,\ldots, 10\}$\\
    LOHI (Low-High) & $\{3,\ldots, 5\}$ & $\{15,\ldots, 20\}$\\
    HIHI (High-High) & $\{9,\ldots, 20\}$ & $\{15,\ldots, 20\}$\\
    \bottomrule
  \end{tabular}
\end{table}

We simulate the responses of the systems in the datasets to a uniform white
noise input in the interval $\sbr{-1,1}$. We collect $N=200$ samples of the
output, from zero initial conditions, and we estimate the static nonlinearity
and the impulse response.

As described in Section~\ref{sec:hammersteins}, the linear blocks are modeled
as zero-mean Gaussian processes with first order stable-spline kernels. We
consider both a parametric model and a nonparametric model for the static
nonlinearity.  All the methods are initialized with $\rho_1 = 1$, $\rho_2=
0.6$. The noise variances are initialized from the prediction error of an
overparameterized least-squares estimate
(see~\cite{bai1998optimal,risuleo2015new}).

In the simulation, we compare the performance of the following estimators:

\begin{description}
  \item[H-P] A semiparametric model for the Hammerstein
    system. It uses the Legendre polynomial basis to construct a linear
    parameterization (with the correct order) of the input:
    \begin{equation}
      \mu_w(\theta)=\Phi\theta,\qquad \sbr[1]{\Phi}_{i,j} = \varphi_j(u_i).
    \end{equation}
    The dynamical system is modeled as a zero-mean Gaussian process with
    covariance matrix given by the first order stable-spline kernel.
  \item[H-MCEM] A nonparametric model for the Hammerstein system with Gibbs
    sampling from the joint posterior with $B=200$ and $M=500$. It uses the
    radial-basis-function kernel~\eqref{eq:rbf} to model the input
    nonlinearity. Note that, because the Hammerstein system is not
    identifiable, we fix $\theta_1=1$ in the algorithm.
  \item[H-VBEM] A nonparametric model for the Hammerstein system with
    variational-Bayes approximation of the joint posterior. It uses the same
    kernel as H-MCEM to model the input  nonlinearity.
  \item[NLHW] The parametric model in Matlab with the default parameters. It
    corresponds to the maximum-likelihood estimator of the model with the
    correct parameterization.
\end{description}
In all methods, the EM iterations are stopped once the relative change in the
parameter values is below $10^{-2}$.

To evaluate the performance of the methods, we use the standard goodness-of-fit
criterion~\eqref{eq:goodnessoffit} for the impulse response of the linear
system. For the input nonlinearity, we compute the
estimated value $\hat w_j$ on a uniform grid of 300 values between -1 and 1 and
we compare it to the true value $w_j$ according to
\begin{equation}
  \mathrm{Fit}^f_j = 1 - \frac{\enVert{w_j - \hat w_j}}{\enVert{w_j -
  \mathrm{mean}(w_j)}},
\end{equation}
where $w_j$ is the vector of values of the true nonlinearity ot the $j$th Monte
Carlo run, and $\hat w_j$ is an estimate of the same vector of values.

\begin{figure*}[htb]
  \centering
\includegraphics{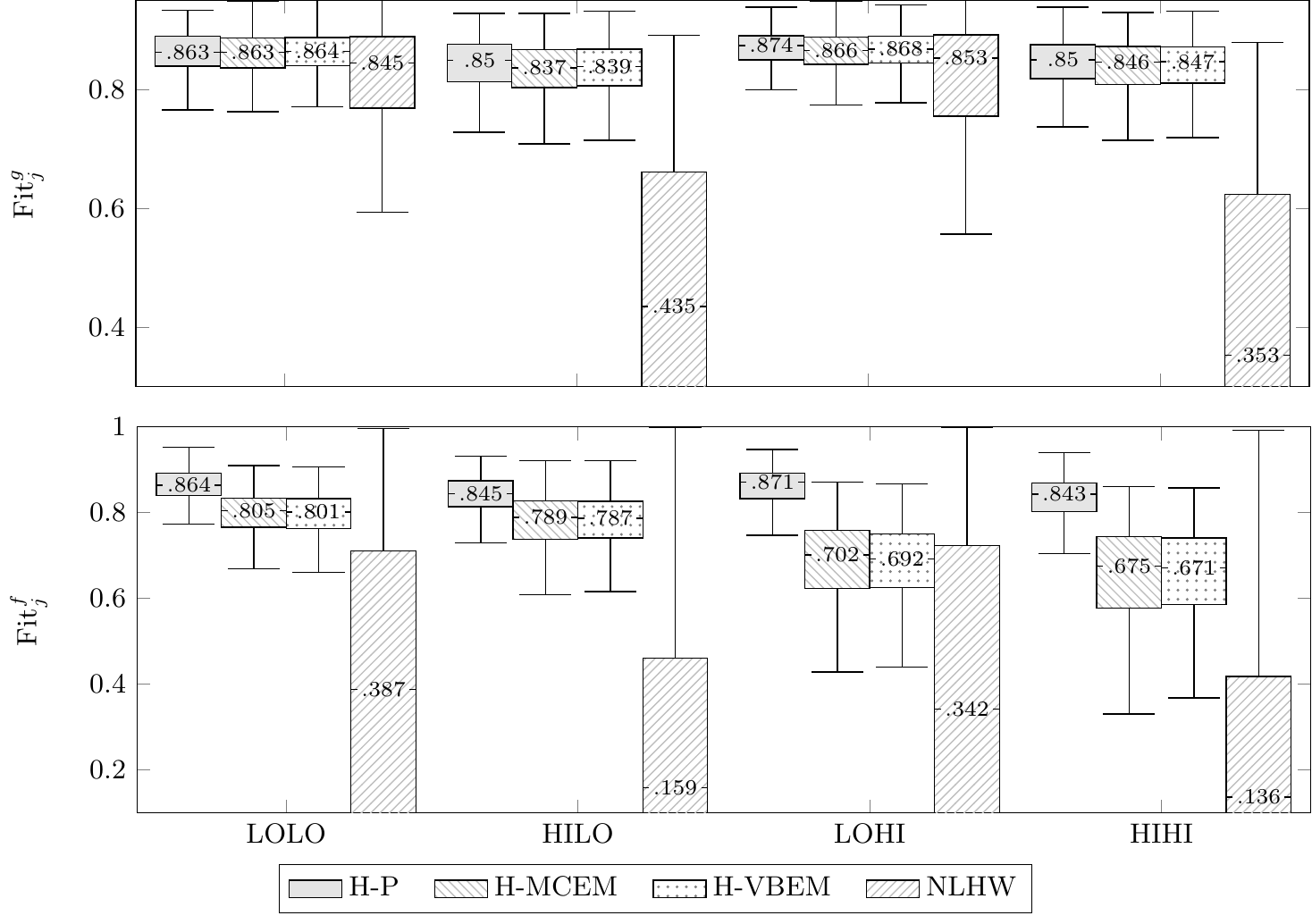}
  \caption{Boxplot of the estimation result}\label{fig:boxplot_hammerstein}
\end{figure*}

The result of the experiment are presented in
Figure~\ref{fig:boxplot_hammerstein}. The figure shows the boxplots of the fit
of the estimated impulse responses (upper pane) and of the static
nonlinearities (lower pane) over the systems in the datasets.

From this simulation, it appears that the proposed nonparametric models are
capable of recovering the system better than the fully parametric NLHW\@. In
addition, it appears that using the correct parametric model for the
input nonlinearity is beneficial in terms of accuracy. As was the case in the
cascaded-system estimation problem, the two approximation methods have comparable performance.

\section{Conclusions}
In this work, we have proposed a new model structure, which we have called the
\emph{uncertain-input model}. Uncertain-input models describe
linear systems subject to inputs about which we have limited information. To
encode the information we have available about the input and the system, we
have used Gaussian-process models.

We have shown how classical problems in system identification can be seen as
uncertain-input estimation problems. Among these applications we find classical
PEM, errors-in-variables and blind system-identification problems,
identification of cascaded linear systems, and identification of Hammerstein
models.

We have proposed an iterative algorithm to estimate the uncertain-input model.
We estimate the impulse response of the linear system and the input
nonlinearity as the posterior means of the Gaussian-process models given the
data. The hyperparameters of the Gaussian-process models are estimated using
the marginal-likelihood method. To solve the related optimization
problem, we have proposed an iterative method based on the EM method.

In the general formulation, the model depends on the convolution of two
Gaussian processes. Therefore, the joint distribution of the data is not
available in closed form.  To circumvent this issue, we have proposed specialized
models, namely the semiparametric and the parametric models, for which the
integrals defining the posterior distributions are available. In the more
general case, we have proposed two approximation methods for the joint
posterior distribution. In the first method, we have used a particle
approximation of the posterior distribution. The particles are drawn using the
Gibbs sampler from Gaussian full-conditional distributions. In the second
method, we have used the variational-Bayes approach to approximate the
posterior distribution. Using a mean-field approximation, we have found that
the posterior distribution can be approximated as a product of two independent
Gaussian random variables.

We have tested the proposed model on two problems: the estimation of cascaded
linear systems and of Hammerstein models. In both cases, the proposed
uncertain-input formulation is able to capture the systems and to provide good
estimates.

Although hinged on the EM method (which is guaranteed to converge under certain
smoothness assumptions) the approximate methods we have proposed do
not have general convergence guarantees: in the formulation given
by~\eqref{eq:ui_model}, there may instances of uncertain-input models for which
the assumptions required for convergence may not hold.  In future publications,
we plan to analyze whether there exists general conditions on the
uncertain-input model such that the algorithms are guaranteed to converge to
optimal solutions.

In addition, the uncertain-input model can be nonidentifiable in certain
configurations (for instance, consider the general errors-in-variables
problem). We plan to further explore this nonidentifiability. Connections with
other problems sharing the same bilinear
structure~\cite{bai2005least,wang2009revisiting} outside of the system
identification framework are also under investigation.

\appendix

\section{Proofs of the main results}\label{app:full_Q}

\subsection{Proof of Theorem~\ref{thm:Qpigs}}\label{pf:Qpigs}
We consider the complete-data likelihood $\p(y,w,g;\tau)$ where $g$ acts as
latent variables. We have that
\begin{equation}
  \begin{aligned}
    \p(y,v,g;\tau) &= \int \p(y,v,g,w;\tau)\dif w= \int
    \p(y|g,w;\sigma_y^2)\p(v|w;\sigma_v^2)\p(g;\rho)\p(w;\theta)\dif w \\
                   &=
    \p(y|g,w=\mu_w(\theta);\sigma_y^2)\p(v|w=\mu_w(\theta);\sigma_v^2)\p(g;\rho),
  \end{aligned}
\end{equation}
where we have used the sampling property of the Dirac density. Hence,
\begin{equation}\label{eq:pfQpigs1}
\begin{aligned}
  \log \p(y,v,g;\tau) &= -\frac{1}{2\sigma_y^2}\enVert{y - G\mu_w}^2
  - \frac{N}{2} \log \sigma_y^2- \frac{1}{2\sigma_v^2} \enVert{v -
  \mu_w}^2 - \frac{N}{2} \log
  \sigma_v^2\\
  & - \frac{1}{2} \enVert{g - \mu_g}_{{K_g}^{-1}}^2 -
  \frac{1}{2} \log \det K_g\,,
\end{aligned}
\end{equation}
where we have dropped explicit dependencies on the hyperparameters.
Taking expectations with respect to $\p(g|y,v;\hat \theta^{(k)})$, we have that
$\E\cbr[1]{\enVert{y- G \mu_w}^2\!} \!=\! y^T y - 2y^T \E\cbr{G}
  \mu_w + \mu_w^T \E\cbr[1]{G^T G} \mu_w$.
The matrix $R$ in~\eqref{eq:R} is such that $G^T = R(I_N\otimes
g)$; hence, we have that
\begin{equation}
  \begin{aligned}
    \E\{G^T\} &= \E\{R(I_N\otimes g)\} = R(I_N\otimes  \hat g^{(k)}) = 
    \hat G^{(k)T},\\
    \E\cbr[1]{G^T G} &= \E\cbr{R(I_N\otimes g)(I_N \otimes g^T) R} = R(I_N\otimes \E\{gg^T\})R^T
    = R(I_N\otimes \hat P_g^{(k)} + \hat g^{(k)}\hat g^{(k)
T})R^T\\
            & = R(I_N\otimes \hat P_g^{(k)})R + R(I_N\otimes \hat g^{(k)}
\hat g^{(k) T})R^T= \hat S_g^{(k)} + R(I_N\otimes \hat g^{(k)})
(I_N\otimes \hat g^{(k) T})R^T\\
&= \hat S_g^{(k)} + \hat G^{(k)T}\hat G^{(k)};
  \end{aligned}
\end{equation}
hence, $\E\cbr[1]{\enVert{y- G \mu_w}^2\!} \!=\! \enVert[1]{y - \hat G^{(k)} \mu_w}^2 +
  \enVert{\mu_w}^2_{\hat S_g^{(k)}}$.

Similarly,
\begin{equation}
  \begin{aligned}
    \E&\cbr{\enVert{g-\mu_g}^2\!}\!\!=\!  
    \trace\cbr[2]{K_g^{-1}\!\!\sbr{\E\{g g^T\}\!-\!2\mu_g \E\{g^T\} \!+\! \mu_g
    \mu_g^T}\!}\\
    &= \trace\cbr[2]{K_g^{-1}\sbr{\hat P_g^{(k)} + \hat g ^{(k)}g^{(k)T}-2\mu_g
\hat g^{(k)T} + \mu_g \mu_g^T}}\\
    &= \enVert{\hat g^{(k)} - \mu_g}^2_{K_{g}^{-1}} +
\trace\cbr[2]{K_g^{-1}\hat P_g^{(k)}}.
    \end{aligned}
\end{equation}
Plugging these expressions into the expectation of~\eqref{eq:pfQpigs1} we
find~\eqref{eq:Qpigs}.

\subsection{Proof of Theorem~\ref{thm:Qmc}}\label{pf:Qmc}
Let $\bar g^{(j)}$ and $\bar w^{(j)}$ be samples draw  from the stationary distribution of the
Gibbs sampler with hyperparameters $\hat \tau$. Now, consider the complete-data likelihood
\begin{equation}\label{eq:completelikelihood_appendix}
  \mathmakebox[0.864\columnwidth][r]{%
\begin{aligned}
  & \log \p(y,v|w,g;\tau)\p(g;\rho)\p(w;\theta) =- \frac{1}{2\sigma_v^2} \enVert{v - w}^2
  - \frac{N}{2} \log
  \sigma_v^2
  -\frac{1}{2\sigma_y^2}\enVert{y - Wg}^2 - \frac{N}{2} \log \sigma_y^2\\
  & \quad - \frac{1}{2} \enVert{g - \mu_g}_{K_g}^2 -
  \frac{1}{2} \log \det K_g - \frac{1}{2} \enVert{w - \mu_w}_{K_w}^2 -
  \frac{1}{2} \log \det K_w\,,
\end{aligned}}
\end{equation}
where we have dropped the explicit dependencies on the hyperparameters. We have that
\begin{equation}\label{eq:pfQmc1}
  \mathmakebox[0.8\columnwidth]{
\begin{aligned}
  &Q^\mc (\tau,\hat\tau) := -\frac{1}{2M\sigma_v^2}  \sum_{j=1}^{M}\enVert{v-
\bar w^{(j)}}^2-
  \frac{N}{2} \log \sigma_v^2 - \frac{1}{2M\sigma_y^2} \sum_{j=1}^M
  \enVert{y - \bar W^{(j)} \bar g^{(j)}}^2 - \frac{N}{2} \log \sigma_y^2\\
  & \quad - \frac{1}{2M}\sum_{j=1}^M \enVert{\bar g^{(j)} - \mu_g}_{{K_g}^{-1}}^2 -
  \frac{1}{2} \log \det K_g - \frac{1}{2M}\sum_{j=1}^M \enVert{\bar w^{(j)} - \mu_w}_{{K_w}^{-1}}^2 -
  \frac{1}{2} \log \det K_w\,.
\end{aligned}}
\end{equation}
Using the definitions in~\eqref{eq:mcem_moments}, we have that
\begin{equation}
  \begin{aligned}
    \sum_{j=1}^M&\enVert[1]{\bar g^{(j)} \!-\! \mu_g}^2_{{K_g}^{-1}} \!=\!
  \sum_{j=1}^M\trace\cbr{K_g^{-1}(\bar g^{(j)} \!-\! \mu_g){(\bar g^{(j)}
      \!\!-\!  \mu_g)}^T}\\
      &= \sum_{j=1}^M\trace\cbr{K_g^{-1}(\bar g^{(j)} \!-\!\hat g \!+\! \hat g \!-\! \mu_g){(\bar g^{(j)}
       \!-\! \hat g \!+\!  \hat g \!-\!\mu_g)}^T}\\
       &= \sum_{j=1}^M\trace\cbr{K_g^{-1}(\bar g^{(j)} \!-\!\hat g)(\bar g^{(j)}
     \!-\! \hat g) }  +M\enVert{\hat g \!-\! \mu_g}_{K_g^{-1}}^2\\
       &= \trace\cbr{K_g^{-1}\sum_{j=1}^M(\bar g^{(j)} \!-\!\hat g)(\bar g^{(j)}
     \!-\! \hat g) }  +M\enVert{\hat g \!-\! \mu_g}_{K_g^{-1}}^2\\
       &= M\trace\cbr{K_g^{-1}\hat P_g}  +M\enVert{\hat g \!-\!
   \mu_g}_{K_g^{-1}}^2;
  \end{aligned}
\end{equation}
similarly, 
\begin{equation}
    \sum_{j=1}^M\enVert[1]{\bar w^{(j)} \!-\! \mu_w}^2_{{K_w}^{-1}} \!=\!
       M\trace\cbr{K_w^{-2}\hat P_w}  +M\enVert{\hat w \!-\!
   \mu_w}_{K_w^{-1}}^2;
\end{equation}
Plugging these expressions into~\eqref{eq:pfQmc1} we
obtain~\eqref{eq:mcem_moments}.

\subsection{Proof of Theorem~\ref{thm:vb_gaussian}}\label{pf:vb_gaussian}
Consider the complete-data likelihood~\eqref{eq:completelikelihood_appendix}.
From~\eqref{eq:q_w_star} we have that
$\log \q_w^\star = \E\cbr{\log p(y,v,g,w;\tau)}$,
where the expectation is taken with respect to $\q_g^\star$. Then, disregarding
terms independent of $w$, we have that
\begin{equation}
    \log \q_w^\star \cong \E\cbr{-\frac{\enVert{y-Gw}^2}{2\sigma_y^2}
    -\frac{\enVert{v-w}^2}{2\sigma_v^2} - \enVert{w-\mu_w}_{{K_w}^{-1}}^{2} }
  \cong - \frac{1}{2}\enVert{w}_{P_w^{-1}}^2 + w^T m_w,
\end{equation}
where
\begin{equation}\label{eq:pfCor72}
    P_w = \del{\frac{1}{\sigma_y^2}\E\{G^T G\} + \frac{1}{\sigma_v^2} I_n
  +K_w^{-1}}^{-1},\qquad
  m_w = P_w\del{\frac{1}{\sigma_y^2}\E\{G^T\} y + \frac{1}{\sigma_v^2}v +
K_w^{-1} \mu_w}.
\end{equation}
Because it is quadratic in $w$, $\q_w$ is a Gaussian distribution. Similarly,
\begin{equation}
    \log \q_g^\star \cong \E\cbr{-\frac{\enVert{y-Wg}^2}{2\sigma_y^2}
    - \enVert{g-\mu_g}_{{K_g}^{-1}}^{2} }
  \cong - \frac{1}{2}\enVert{g}_{P_g^{-1}}^2 + g^T m_g,
\end{equation}
where
\begin{equation}\label{eq:pfCor71}
    P_g = \del{\frac{1}{\sigma_y^2}\E\{W^T W\} 
  +K_g^{-1}}^{-1},\qquad
  m_g = P_g\del{\frac{1}{\sigma_y^2}\E\{W^T\} y 
 + K_g^{-1} \mu_g}.
\end{equation}
and where all expectations are taken with respect to $\q^\star_w$. Because it
is quadratic in $g$, $\q_g$ is also a Gaussian distribution.

\subsection{Proof of Corollary~\ref{cor:iterative_vb}}\label{pf:iterative_vb}
Tracing the proof of Theorem~\ref{thm:vb_gaussian}, we have that $\q_w^{(j+1)}$
is a Gaussian distribution with covariance matrix and mean given
by~\eqref{eq:pfCor72} where the expectations are taken with respect to
$\q_g^{(j)}$. Using the matrix $R$ in~\eqref{eq:R}, we have
\begin{equation} 
  \begin{aligned}
    \E\{G^T G\} &= \E\{R(I_N\otimes g)(I_N \otimes g^T) R^T\}
                = R(I_N\otimes \E\{g g^T\}) R^T = T_g^{(j)},\\
    \E\{G^T\} &= \E\{R(I_N\otimes g)\} =
R(I_N\otimes  g^{(j+1)}) =   G^{(j)T}.
  \end{aligned}
\end{equation}
Similarly, $\q_g^{(j+1)}$ is a Gaussian distribution with covariance matrix and
mean given by~\eqref{eq:pfCor71}, where the expectations are taken with respect
to $\q_w^{(j+1)}$. We have that
\begin{equation} 
  \begin{aligned}
    \E\{W^T W\} &= \E\{R(I_N\otimes w)(I_N \otimes w^T) R^T\}
                = R(I_N\otimes \E\{w w^T\}) R^T = T_w^{(j+1)},\\
    \E\{W^T\} &= 
R(I_N\otimes  w^{(j+1)}) =   W^{(j+1)T}.
  \end{aligned}
\end{equation}
Plugging these expectations into~\eqref{eq:pfCor71} and~\eqref{eq:pfCor72} we
obtain~\eqref{eq:vb_moments}.

\subsection{Proof of Theorem~\ref{thm:Qvb}}\label{pf:Qvb}
We consider again the complete-data
likelihood~\eqref{eq:completelikelihood_appendix}. Taking the expectation with
respect to the independent Gaussian densities $\q_g^{(k)}$ and $\q_w^{(k)}$, we have
that
\begin{align}
  &\E\cbr{\enVert{v - w}^2 }  =  \hat R_v^{(k)} + \trace\,\{
  \hat P^{(k)}\},\\
  &\E\cbr{\enVert{g  -  \mu_g}_{{K_g}^{-1}}^2}  =  \enVert[1]{\hat g^{(k)} -
\mu_g}^2_{{K_g}^{-1}}+ \trace\,\{ K_g^{-1}\hat P_g^{(k)}\},\\
&\E\cbr{\enVert{w  -  \mu_w}_{{K_w}^{-1}}^2}  =  \enVert[1]{\hat w^{(k)} -
\mu_w}^2_{{K_w}^{-1}}+ \trace\,\{ K_w^{-1}\hat P_w^{(k)}\}.
\end{align}
Note that
$\E\{W^T W\} = R^T(I_N\otimes \hat P_w^{(k)} + \hat w^{(k)}\hat w^{(k)T}) R
           = R^T(I_N\otimes \hat P_w^{(k)})R + R^T(I_N\otimes \hat w^{(k)}\hat w^{(k)T}) R
           = \hat S_w^{(k)}  + \hat W^{(k)T}\hat W^{(k)} = \hat T_w^{(k)}$,
and that $\E\{g^T W^T Wg\} = \trace\{\hat T_w^{(k)} (P_g^{(k)} + \hat g^{(k)}
      \hat g^{(k)T})
    =\enVert[1]{\hat W^{(k)}\hat g^{(k)}}^2+ \enVert[1]{\hat g^{(k)}}_{S_w^{(k)}}^2
     + \trace\{\hat T_w^{(k)} \hat P_g^{(k)}\}$,
Hence, $ \E\cbr{\enVert{y  -   W g}^2 }   =  y^T  y  -  2 y^T\E\{Wg\} + \E\{g^T W^T W
  g\}
  = \hat R y^{(k)}
    + \enVert[1]{\hat g^{(k)}}_{S_w^{(k)}}^2 + \trace\,\{\hat T_w^{(k)}
      \hat P_g^{(k)}\}$.
Plugging the terms in the expression of the complete likelihood, we
get~\eqref{eq:Qvb}.

\section{Acknowledgment}
This work was supported by the Swedish Research Council via the projects
\emph{NewLEADS} (contract number: 2016-06079) 
and \emph{System
identification: Unleashing the algorithms} (contract number: 2015-05285), 
and by the European Research Council under the advanced grant
\emph{LEARN} (contract number: 267381).

\bibliographystyle{plain}
\small
\bibliography{uncertain_input}

\begin{thebibliography}{10}

\bibitem{abedmeraim1997blind}
K.~Abed-Meraim, W.~Qiu, and Y.~Hua.
\newblock Blind system identification.
\newblock {\em Proc. IEEE}, 85(8):1310--1322, 1997.

\bibitem{ahmed2014blind}
A.~Ahmed, B.~Recht, and J.~Romberg.
\newblock Blind deconvolution using convex programming.
\newblock {\em IEEE Trans. Inform. Theory}, 60(3):1711--1732, 2014.

\bibitem{bai1998optimal}
E.~W. Bai.
\newblock An optimal two-stage identification algorithm for
  {Hammerstein--Wiener} nonlinear systems.
\newblock {\em Automatica}, 34(3):333--338, 1998.

\bibitem{bai2004convergence}
E.~W. Bai and D.~Li.
\newblock Convergence of the iterative {Hammerstein} system identification
  algorithm.
\newblock {\em IEEE Trans. Autom. Control}, 49(11):1929--1940, 2004.

\bibitem{bai2005least}
E.~W. Bai and Y.~Liu.
\newblock On the least squares solutions of a system of bilinear equations.
\newblock In {\em Proc. {IEEE} Conf. Decis. Control (CDC)}. {IEEE}, 2005.

\bibitem{beal2003variational}
M.~J. Beal.
\newblock {\em Variational Algorithms for Approximate Bayesian Inference}.
\newblock PhD thesis, Gatsby Computational Neuroscience Unit, University
  College London, 2003.

\bibitem{berger1999integrated}
J.~O. Berger, B.~Liseo, and R.~L. Wolpert.
\newblock Integrated likelihood methods for eliminating nuisance parameters.
\newblock {\em Statist. Sci.}, 14(1):1--28, 1999.

\bibitem{bernardo2000bayesian}
J.~M. Bernardo and A.~F.~M. Smith.
\newblock {\em Bayesian Theory}.
\newblock JOHN WILEY \& SONS INC, 2000.

\bibitem{bishop2006pattern}
C.~M. Bishop.
\newblock {\em Pattern Recognition and Machine Learning}.
\newblock Springer, 2006.

\bibitem{bottegal2015blind}
G.~Bottegal, R.~S. Risuleo, and H.~Hjalmarsson.
\newblock Blind system identification using kernel-based methods.
\newblock In {\em Proc. IFAC Symp. System Identification (SYSID)}, volume~48,
  pages 466--471, 2015.

\bibitem{boyd2004convex}
S.~Boyd and L.~Vandenberghe.
\newblock {\em Convex Optimization}.
\newblock Cambridge University Press, 2004.

\bibitem{castaldi1996identification}
P.~Castaldi and U.~Soverini.
\newblock Identification of dynamic errors-in-variables models.
\newblock {\em Automatica}, 32(4):631--636, 1996.

\bibitem{chen2014constructive}
T.~Chen and L.~Ljung.
\newblock Constructive state space model induced kernels for regularized system
  identification.
\newblock In {\em Proc. IFAC World Cong.}, volume~19, pages 1047--1052, 2014.

\bibitem{dempster1977maximum}
A.~P. Dempster, N.~M. Laird, and D.~B. Rubin.
\newblock Maximum likelihood from incomplete data via the em algorithm.
\newblock {\em J. R. Stat. Soc. Ser. B. Stat. Methodol.}, pages 1--38, 1977.

\bibitem{dinuzzo2015kernels}
F.~Dinuzzo.
\newblock Kernels for linear time invariant system identification.
\newblock {\em SIAM J. Control Optim.}, 53(5):3299--3317, 2015.

\bibitem{everitt2013geometric}
N.~Everitt, C.~Rojas, and H.~Hjalmarsson.
\newblock A geometric approach to variance analysis of cascaded systems.
\newblock In {\em Proc. {IEEE} Conf. Decis. Control (CDC)}, 2013.

\bibitem{frigola2014identification}
R.~Frigola, F.~Lindsten, T.~B. Schön, and C.~E. Rasmussen.
\newblock Identification of gaussian process state-space models with particle
  stochastic approximation {EM}.
\newblock {\em {IFAC} Proc. Vol.}, 47(3):4097--4102, 2014.

\bibitem{geman1984stochastic}
S.~Geman and D.~Geman.
\newblock Stochastic relaxation, {Gibbs} distributions, and the {Bayes}ian
  restoration of images.
\newblock {\em IEEE Trans. Pattern Anal. Mach. Intell.}, (6):721--741, 1984.

\bibitem{gilks1996markov}
W.~R. Gilks, S.~Richardson, and D.~J. Spiegelhalter.
\newblock {\em {Markov Chain {Monte Carlo} in Practice}}.
\newblock Chapman and Hall London, 1996.

\bibitem{giri2010block}
F.~Giri and E.~W. Bai.
\newblock {\em Block-oriented nonlinear system identification}.
\newblock Springer, 2010.

\bibitem{hjalmarsson2009system}
H.~Hjalmarsson.
\newblock System identification of complex and structured systems.
\newblock {\em Eur. J. Control}, 15(3-4):275--310, 2009.

\bibitem{kullback1951information}
S.~Kullback and R.~A. Leibler.
\newblock On information and sufficiency.
\newblock {\em Ann. Math. Statist.}, 22(1):79--86, 1951.

\bibitem{linder2017identification}
J.~Linder and M.~Enqvist.
\newblock Identification of systems with unknown inputs using indirect input
  measurements.
\newblock {\em International Journal of Control}, 90(4):729--745, 2017.

\bibitem{ljung1999system}
L.~Ljung.
\newblock {\em System Identification, Theory for the User}.
\newblock Prentice Hall, 1999.

\bibitem{maritz1989empirical}
J.~Maritz and T.~Lwin.
\newblock {\em {Empirical bayes methods}}.
\newblock Chapman and Hall London, 1989.

\bibitem{markovsky2013structured}
I.~Markovsky and K.~Usevich.
\newblock Structured low-rank approximation with missing data.
\newblock {\em SIAM J. Matrix Anal. \& Appl.}, 34(2):814--830, 2013.

\bibitem{mccombie2005laguerre}
D.~B. McCombie, A.~T. Reisner, and H.~H. Asada.
\newblock Laguerre-model blind system identification: Cardiovascular dynamics
  estimated from multiple peripheral circulatory signals.
\newblock {\em IEEE Trans. Biomed. Eng.}, 52(11):1889--1901, 2005.

\bibitem{mclachlan2007algorithm}
G.~McLachlan and T.~Krishnan.
\newblock {\em {The EM algorithm and extensions}}, volume 382.
\newblock John Wiley and Sons, 2007.

\bibitem{meng1993maximum}
X.~L. Meng and D.~B. Rubin.
\newblock {Maximum likelihood estimation via the ECM algorithm: A general
  framework}.
\newblock {\em Biometrika}, 80(2):267--278, 1993.

\bibitem{moulines1995subspace}
E.~Moulines, P.~Duhamel, J.~F. Cardoso, and S.~Mayrargue.
\newblock Subspace methods for the blind identification of multichannel {FIR}
  filters.
\newblock {\em IEEE Trans. Signal Process.}, 43(2):516--525, 1995.

\bibitem{nakajima1993blind}
N.~Nakajima.
\newblock Blind deconvolution using the maximum likelihood estimation and the
  iterative algorithm.
\newblock {\em Opt. Commun.}, 100(1-4):59--66, 1993.

\bibitem{neath2013convergence}
R.~C. Neath.
\newblock On convergence properties of the {Monte Carlo EM} algorithm.
\newblock In {\em Advances in Modern Statistical Theory and Applications: A
  Festschrift in honor of Morris L. Eaton}, pages 43--62. Institute of
  Mathematical Statistics, 2013.

\bibitem{ohlsson2014blind}
H.~Ohlsson, L.~J. Ratliff, R.~Dong, and S.~S. Sastry.
\newblock Blind identification via lifting.
\newblock In {\em Proc. IFAC World Cong.}, 2014.

\bibitem{pillonetto2009bayesian}
G.~Pillonetto and A.~Chiuso.
\newblock A {Bayes}ian learning approach to linear system identification with
  missing data.
\newblock In {\em Proc. IEEE Conf. Decis. Control (CDC)}, pages 4698--4703,
  2009.

\bibitem{pillonetto2014kernel}
G.~Pillonetto, F.~Dinuzzo, T.~Chen, G.~De~Nicolao, and L.~Ljung.
\newblock Kernel methods in system identification, machine learning and
  function estimation: A survey.
\newblock {\em Automatica}, 50(3):657--682, 2014.

\bibitem{pillonetto2011new}
G.~Pillonetto, M.~H. Quang, and A.~Chiuso.
\newblock A new kernel-based approach for nonlinear system identification.
\newblock {\em IEEE Trans. Autom. Control}, 56(12):2825--2840, 2011.

\bibitem{pillonetto2010new}
Gianluigi Pillonetto and Giuseppe De~Nicolao.
\newblock A new kernel-based approach for linear system identification.
\newblock {\em Automatica}, 46(1):81--93, 2010.

\bibitem{rasmussen2006gaussian}
C.~Rasmussen and C.~Williams.
\newblock {\em Gaussian processes for machine learning}.
\newblock the MIT Press, 2006.

\bibitem{risuleo2015kernel}
R.~S. Risuleo, G.~Bottegal, and H.~Hjalmarsson.
\newblock A kernel-based approach to {Hammerstein} system identication.
\newblock In {\em Proc. IFAC Symp. System Identification (SYSID)}, volume~48,
  pages 1011--1016, 2015.

\bibitem{risuleo2015new}
R.~S. Risuleo, G.~Bottegal, and H.~Hjalmarsson.
\newblock A new kernel-based approach to overparameterized {Hammerstein} system
  identification.
\newblock In {\em Proc. {IEEE} Conf. Decis. Control ({CDC})}, pages 115--120,
  2015.

\bibitem{risuleo2015estimation}
R.~S. Risuleo, G.~Bottegal, and H.~Hjalmarsson.
\newblock On the estimation of initial conditions in kernel-based system
  identification.
\newblock In {\em Proc. IEEE Conf. Decis. Control (CDC)}, pages 1120--1125,
  2015.

\bibitem{risuleo2016kernel}
R.~S. Risuleo, G.~Bottegal, and H.~Hjalmarsson.
\newblock Kernel-based system identification from noisy and incomplete
  input-output data.
\newblock In {\em Proc. IEEE Conf. Decis. Control (CDC)}. Institute of
  Electrical and Electronics Engineers ({IEEE}), 2016.

\bibitem{soederstroem2003why}
T.~S{\"o}derstr{\"o}m.
\newblock Why are errors-in-variables problems often tricky?
\newblock In {\em Proc. European Control Conf. (ECC)}, pages 802--807, 2003.

\bibitem{soederstroem2007errors}
T.~S{\"o}derstr{\"o}m.
\newblock Errors-in-variables methods in system identification.
\newblock {\em Automatica}, 43(6):939--958, 2007.

\bibitem{soederstroem2010system}
T.~S{\"o}derstr{\"o}m.
\newblock System identification for the errors-in-variables problem.
\newblock In {\em Proc. UKACC Int. Conf. Control}, pages 1--14, 2010.

\bibitem{svensson2017flexible}
A.~Svensson and T.~B. Sch\"on.
\newblock A flexible state{\textendash}space model for learning nonlinear
  dynamical systems.
\newblock {\em Automatica}, 80:189--199, 2017.

\bibitem{wallin2014maximum}
R.~Wallin and A.~Hansson.
\newblock Maximum likelihood estimation of linear {SISO} models subject to
  missing output data and missing input data.
\newblock {\em Int. J. Control}, pages 1--11, 2014.

\bibitem{wang2009revisiting}
J.~Wang, Q.~Zhang, and L.~Ljung.
\newblock Revisiting the two-stage algorithm for hammerstein system
  identification.
\newblock In {\em Proc. {IEEE} Conf. Decis. Control (CDC)}. {IEEE}, 2009.

\bibitem{wei1990monte}
G.~C.~G. Wei and M.~A. Tanner.
\newblock A {Monte Carlo} implementation of the {EM} algorithm and the poor
  man's data augmentation algorithms.
\newblock {\em Journal of the American Statistical Association},
  85(411):699--704, 1990.

\bibitem{wu1983convergence}
C.~F.~J. Wu.
\newblock On the convergence properties of the {EM} algorithm.
\newblock {\em Ann. Statist.}, 11(1):95--103, 1983.

\bibitem{zhang2015errors}
E.~Zhang and R.~Pintelon.
\newblock Errors-in-variables identification of dynamic systems in general
  cases.
\newblock In {\em Proc. IFAC Symp. System Identification (SYSID)}, volume~48,
  pages 309--313, 2015.

\end{thebibliography}

\end{document}